\newtheorem{proposition}{Proposition}
\newtheorem{theorem}{Theorem}
\newtheorem{corollary}{Corollary}[theorem]
\DeclareMathOperator{\Tr}{Tr}
\newcommand{\oper}[1]{\ket{#1}\bra{#1}}
\title{Distribution of quantum gravity induced entanglement in many-body systems}
\author[a, b]{Pratik Ghosal,}
\author[c, d]{Arkaprabha Ghosal,}
\author[a]{Somshubhro Bandyopadhyay} %\footnote{Corresponding author: Somshubhro Bandyopadhyay (som@jcbose.ac.in)}}
\affiliation[a]{Department of Physical Sciences, Bose Institute, EN 80, Sector V, Bidhannagar, Kolkata 700091, India}
\affiliation[b]{Department of Physics of Complex Systems, S. N. Bose National Centre for Basic Sciences, Block JD, Sector III, Salt Lake, Kolkata 700106, India}
\affiliation[c]{Optics \& Quantum Information Group, The Institute of Mathematical Sciences, CIT Campus, Taramani, Chennai 600 113, India}
\affiliation[d]{Homi Bhabha National Institute, Training School Complex, Anushakti Nagar, Mumbai 400085, India}
\emailAdd{ghoshal.pratik00@gmail.com}
\emailAdd{a.ghosal1993@gmail.com}
\emailAdd{som@jcbose.ac.in}
\abstract{Recently, it was shown that if two distant test masses, each in a spatially superposed quantum state, become entangled due to their mutual gravitational interaction, then this entanglement could serve as evidence of the quantum nature of gravity. We extend this treatment to a many-body system in a general setup and study the entanglement properties of the time-evolved state. We exactly compute the time-dependent I-concurrence for every bipartition and obtain the necessary and sufficient condition for the creation of genuine many-body entanglement. We further show that this entanglement is of generalised GHZ type when certain conditions are met. We also evaluate the amount of multipartite entanglement in the system using a set of generalised Meyer-Wallach measures.}
\begin{document}
\maketitle
\flushbottom
%--------------------------------------------------------------

\section{Introduction}
%-----------------------------------------------------------------

Gravity remains the only known interaction in nature for which we do not have any empirical evidence of it being quantum. In particular, verifying any reasonable prediction of diverse quantum theories of gravity would require energy as high as Planck energy---the regime where it is commonly believed that the quantum effects of gravity are revealed. Such high energies, however, are not accessible with current technologies. So, alternatively, one could look for gravity-induced phenomena at lower energies that could only be explained by treating gravity as a quantum entity.

Recently, a table-top experiment, namely quantum gravity-induced entanglement of masses (QGEM), has been proposed to test the quantum nature of gravity \cite{Bose2017PRL, Marletto-Vedral2017PRL}. The proposal considers two distant neutral test masses, each prepared in a superposition of spatially localised quantum states, interacting via their mutual gravity. It then shows that the time evolution under gravity leads to entanglement between the masses, which can be witnessed using test masses with embedded spins in a Stern-Gerlach (SG) interferometric setup through spin-correlation measurements. This entanglement, it was argued, is a signature of the quantum nature of gravity. Though challenges remain, the experimental realisation of this proposal may not be far away in the foreseeable future.

The argument \cite{Bose2017PRL} relies on two key elements: the gravitational interaction between the test masses is mediated by a field, and the quantum information principle: \textit{no creation of entanglement by local operations and classical communication (LOCC)} \cite{PhysRevA.54.3824, Horodecki2009EntanglementReview}, which states that two or more spatially separated unentangled quantum systems interacting only via classical channels cannot become entangled. The argument now goes as follows: Since the test masses were not entangled initially, they could never become entangled if the gravitational field is classical. The creation of entanglement thus implies that the mediator of the gravitational interaction, assumed to be massless spin-2 graviton, must be quantum, and the exchange of this virtual graviton is a form of quantum communication responsible for creating entanglement \cite{Marshman2020}.\footnote{Ref. \cite{Christodoulou:2018cmk} argues that this entanglement could also be seen as evidence of the quantum superposition of different spacetime geometries (metrics).} Note that, for the argument to go through, one only needs to ensure that gravity is the only dominant interaction between the masses, which requires the distance between the test masses to be such that even for their closest approach, the short-range Casimir-Polder interaction is negligible \cite{Bose2017PRL}.

The entanglement generation phenomenon has been studied within the framework of linearised quantum gravity  \cite{Marshman2020, PhysRevD.105.106028, PhysRevLett.130.100202}. However, it has been demonstrated that the primary contribution originates solely from the Newtonian potential term \cite{Marshman2020, PhysRevLett.130.100202}. This interestingly shows that even the Newtonian gravitational interaction can exhibit quantum characteristics. Moreover, it has been shown that the entanglement generated by Newtonian interaction necessarily implies the existence of gravitons \cite{PhysRevD.105.024029}, with little apparent distinction observed between entanglement mediated by Newtonian gravity and that mediated by gravitons  \cite{PhysRevD.105.086001}.

It is important to note that several features of gravity-induced entanglement, especially entanglement generation rate, depend on the setup, which in a two-body scenario is determined by the directions of the superpositions. By ``direction of superposition," we refer to the direction from one spatially localised state to another within the superposition of a single mass. If the superpositions are in the same direction as their separation, the setup is linear (as in the original proposal \cite{Bose2017PRL}), whereas in a parallel setup, the superpositions are parallel to each other and perpendicular to the separation. Some setups are better than others; for example, the entanglement generation rate in a  parallel setup is higher than linear \cite{nguyen2020entanglement}.

Subsequent works have considered quantum aspects of gravitational interaction \cite{Marshman2020, Christodoulou:2018cmk, PhysRevD.105.106028, PhysRevLett.130.100202}, entanglement generation in optomechanical setups \cite{PhysRevD.102.106021, Krisnanda2020, Carney_2019, Carlesso_2019} and atom interferometer \cite{PRXQuantum.2.030330} (see also \cite{universe8020058}), an entanglement witness that reduces the required interaction time, thereby making experiments feasible for high decoherence rates \cite{PhysRevA.102.022428}, noise and decoherence due to gravitons \cite{PhysRevD.103.044017}, and other decoherence effects in experiments \cite{Rijavec_2021}.

\subsection*{Motivation and Summary} A notable feature of the QGEM framework is that it can be extended, without much difficulty, to more than two test masses. Since gravity acts pairwise, one would expect the time evolution, in general, would lead to a many-body entangled state (see also \cite{Haine_2021}). Indeed, recent results have shown this to be the case: Ref. \cite{Miki2021} considered $N\geq 3$ test masses in a finite linear chain configuration and studied decoherence effect due to gravitational interaction, Ref. \cite{PhysRevD.107.064054} considered both three and four test masses and identified the setup that is most efficient for entanglement generation rate, Ref. \cite{Schut2022} compared entanglement generation rate and entanglement robustness of multiple setups involving three test masses, and Ref. \cite{liu2023multiqubit} investigated multipartite entanglement in symmetric geometries. These works, however, have considered specific setups, namely, linear, parallel, or a star \cite{Schut2022}.

The motivation of the present work stems from the point of view that (quantum) gravity is a bona fide means to entangle massive objects. Then, what are the characteristic features of this entanglement and quantum correlations in a many-body system?  It is expected that the setup of the system, i.e., the relative orientations of the mass superpositions, will play a crucial role in determining the characteristics of the many-body entanglement. Therefore, to answer our question one needs to consider the time evolution in a general setup, that is, without any underlying assumptions on the geometry or superposition directions. In this article, we study the properties of gravity-induced many-body entanglement with as much generality as possible. Since many-body entanglement in quantum systems is rich in structure and is known to exhibit features not present in the two-body scenario, we expect to identify the properties that stand to hold when the entanglement arises from mutual gravitational interactions between massive objects.

Specifically, we consider a system of $N\geq 3$ neutral test masses, each initially prepared in a superposition of two spatially localised states. We do not assume any particular geometry associated with the system, nor do we assume any specific setup; that is, the relative orientations of the superpositions are, in general, arbitrary. However, one thing needs to be ensured: for any conceivable setup, the distance between any two localised states, each belonging to a different superposition, should be such that the Casimir-Polder force is negligible and gravity is the only dominant interaction. This implies that while the localised states from different superpositions need to be close, they cannot be too close to each other.

Since each test mass is initially prepared in a superposition of two spatially localised states, it is associated with two spatial degrees of freedom and can thus be considered effectively as a qubit---a two-level quantum system. Therefore, we start from an $N$-qubit product state that evolves under a unitary time evolution operator, where the complete Hamiltonian is the sum of pairwise Hamiltonians, each corresponding to the gravitational interaction between a mass-pair. The purpose of the present paper is to study the properties of the time-evolved $N$-qubit pure state.

To study the entanglement properties, we will primarily consider the bipartition approach. In this approach, one looks at entanglement across every bipartition and tries to understand how it is distributed throughout the system. For example, we say the system is genuine many-body entangled if and only if entanglement is nonzero across every bipartition. Generally speaking, unless there are enough symmetries, one needs to look at all bipartitions to get a reasonably complete picture of entanglement distribution. To compute the multipartite entanglement present in the system, we use a set of generalised Meyer-Wallach measures \cite{PhysRevA.69.052330, meyer2002global}.

We begin with the two-body scenario in a general setup (Section \ref{s2}). We obtain an expression of the time-evolved two-body state by extending the prescription outlined in Ref. \cite{Bose2017PRL}. We then compute its concurrence \cite{PhysRevLett.80.2245}, which is found to be a function of time and the entangling phase. This entangling phase, which fully determines the entanglement, is only a function of distances between the localised states from different superpositions. So, one could have different setups but the same entangling phase and, therefore, identical entanglement properties. The entanglement exhibits periodic oscillations, the period being inversely proportional to the entangling phase. Thus, our analysis shows that this behaviour is generic.\footnote{All setups, except those for which the entangling phase is zero, generate entanglement which shows this oscillatory behaviour.} We also introduce the notations and discuss the unitary description of the dynamics in Section \ref{s2}.

In Section \ref{s3}, we discuss the three-body problem. Here, we consider the geometry in which the test masses are placed on the vertices of a triangle with unequal sides, and the directions of the superpositions are arbitrary. This is clearly the most general setup conceivable. Since the time-evolved state is a three-qubit pure state, it can be written as a pure state in $\mathbb{C}^{2}\otimes \mathbb{C}^{4}$ across any bipartition. By the Schmidt decomposition, this state can be treated as a two-qubit pure state. We obtain exact expressions of concurrence \cite{PhysRevLett.80.2245} of this state for every bipartition. In particular, for a bipartition, say $m_i|m_jm_k$, the corresponding concurrence is a function of time and the entangling phases corresponding to the mass-pairs ($m_i,m_j$) and ($m_i, m_k$) but not ($m_j,m_k$). We show that time evolution leads to genuine three-body entanglement if and only if at least two of the three entangling phases are nonzero. In addition, if their ratio is that of two odd integers, the system evolves into a maximally entangled GHZ state with 1 ebit of entanglement across every bipartition. Furthermore, if all three entangling phases are nonzero such that the ratio of any two is not rational, the system becomes genuine three-body entangled and remains so at all times $t > 0$, never becoming bi-separable or fully separable. We also investigate three-body correlations using entanglement monogamy relations \cite{CKW2000}, which tell us about the distribution of pairwise entanglement and three-body correlations as quantified by 3-tangle \cite{CKW2000}. We evaluate the 3-tangle and pairwise entanglements in terms of the three entangling phases.

In Section \ref{s4}, we extend the three-body analysis to an $N$-body system. In considering the most general setup, we do not assume any specific geometry for the arrangement of the masses, nor do we assume any particular orientation of the superpositions of the masses. We are interested in investigating entanglement distribution in the $N$-body system. To this end, we compute an exact expression of the I-concurrence, a measure of bipartite entanglement \cite{Rungta2001}, of the time-evolved $N$-body state for any $k|(N-k)$ bipartition of the system, where $1\leq k\leq \left\lfloor \frac{N}{2}\right\rfloor$. The I-concurrence is given by a function of the entangling phases associated with only those mass-pairs whose masses belong to the alternative parts of the bipartition under consideration. For clarity, we represent the $N$-body system as a graph, with its nodes representing the masses and the edges representing entangling phases, and provide a pictorial representation of the mathematical expression for the I-concurrence. We show that the time-evolution leads to genuine $N$-body entanglement if and only if the graph is ``connected". Additionally, if all the entangling phases are nonzero and the ratio of any two is a ratio of two odd integers, the time evolution periodically leads to a generalised GHZ state with 1 ebit entanglement across every bipartition. We also evaluate a family of generalised Meyer-Wallach measures, quantifying the multipartite entanglement \cite{PhysRevA.69.052330} as a function of all the entangling phases.

In Section \ref{s5}, we compare the entanglement in a given $1|(N-1)$ bipartition across different values of $N$ by varying the number of masses in the part containing $(N-1)$ masses, without changing any other parameters. We show that the I-concurrence for the conisdered bipartition, at any instant of time $t>0$, is lower bounded by the maximum I-concurrence value among the $1|(N-2)$ bipartitions of the corresponding $(N-1)$-body system, each involving one mass less.  This observation suggests a general trend wherein the entanglement shared by a mass with the rest of the system tends to persist for longer durations in $N$-body systems compared to $(N-1)$-body systems.

%----------------------------------------------------------------------
\section{\label{s2} Two-body QGEM: General setup and entangling phase}
%----------------------------------------------------------------------

Consider two neutral test masses $m_1$ and $m_2$, each initially prepared in a spatially superposed state [Fig. \ref{qgem_two}]
\begin{align}
    |\psi\rangle_p=\frac{1}{\sqrt{2}} \sum_{j_p=0_p,1_p} |j_p\rangle,~~~p=1,2,
\end{align}
where $|0_p\rangle$, $|1_p\rangle$ are localised (non-overlapping) Gaussian wavepackets satisfying $\langle 0_p|1_p\rangle=0$. This is ensured by assuming the width of each wavepacket $\ll \delta$, where $\delta$ is the separation between the centres of the wavepackets. The centers of superpositions of the test masses, i.e., the centres of the states $|\psi\rangle_1$ and $|\psi\rangle_2$, are separated by a distance $d$. 

\begin{figure}[t]
\centering
\includegraphics[width=0.6\textwidth]{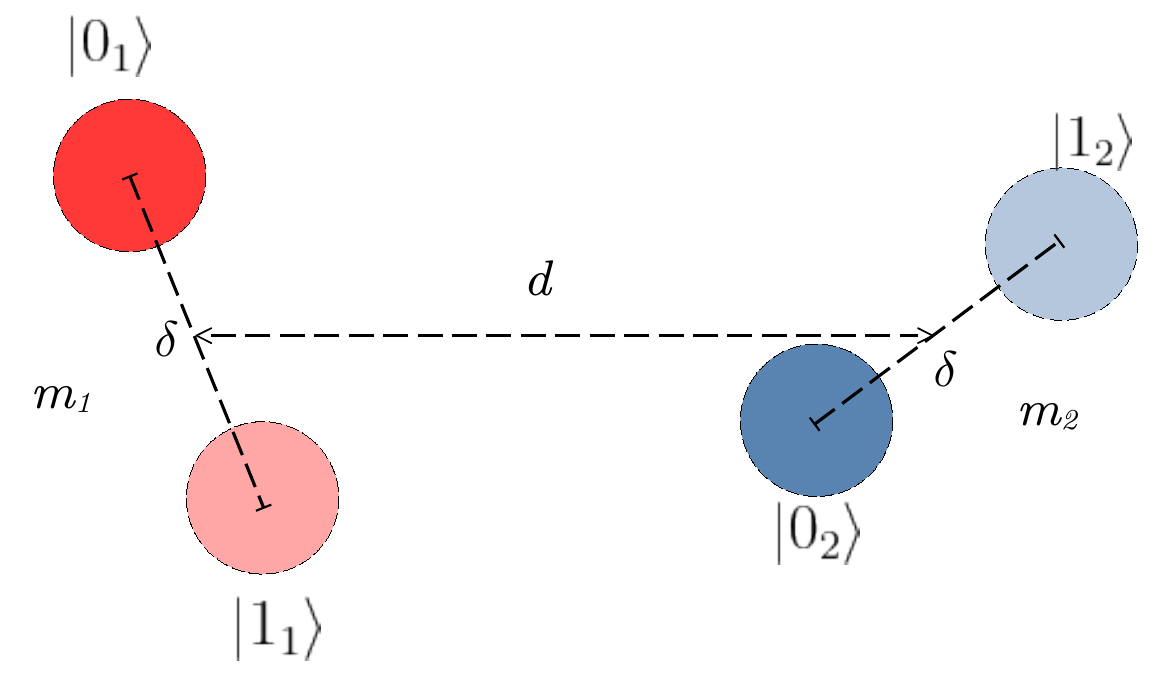}
\caption{\label{qgem_two} Two masses $m_1$ and $m_2$, each prepared in a superposition of two spatially localised states, are kept at a distance $d$ apart from each other. The two localised states of mass $m_p$ are denoted as $|0_p\rangle$ and $|1_p\rangle$, where $p=1,2$, and they have a separation of $\delta$ between them. The orientation of the superpostions, or the setup of the system, is taken to arbitrary.}
\end{figure}

The relative orientation of the lines joining the localised states of each test mass defines the setup of the two-body system. For instance, if the localised states $|0_1\rangle,|1_1\rangle,|0_2\rangle$ and $|1_2\rangle$ all lie on the same straight line, the setup is called linear \cite{Bose2017PRL}. Whereas, if the line joining $|0_1\rangle$ and $|1_1\rangle$ is parallel to the line joining $|0_2\rangle$ and $|1_2\rangle$, and both lines are perpendicular to the line connecting the centres of superpositions, the setup is called parallel \cite{nguyen2020entanglement}. More general setups are also possible, where the line segments joining the localised states are neither parallel nor lie along the line joining the centres of superpositions. We do not fix any specific setup in our analysis.

Let $d_{j_1j_2}$ denote the separation between the states $|j_1\rangle$ and $|j_2\rangle$ for $j_1=0_1,1_1$ and $j_2=0_2,1_2$. We assume that gravity is the only dominant interaction between the masses, even at closest approach, given by $\min_{j_1,j_2} d_{j_1j_2}$, so that the short-range Casimir-Polder force is negligible \cite{Bose2017PRL}. 

The initial state of the system is given by
\begin{align}
    |\Psi(t=0)\rangle_{12} = |\psi\rangle_1|\psi\rangle_2 = \frac{1}{2}\sum_{\substack{j_1=0_1,1_1\\j_2=0_2,1_2}} |j_1\rangle|j_2\rangle. 
\end{align}
Now when the system evolves under gravity, the components of the term $|j_1\rangle|j_2\rangle$ interact under the gravitational potential $V_{j_1j_2} = -\frac{Gm_1m_2}{d_{j_1j_2}}$, which, in general, is different for different $j_1$ and $j_2$. Consequently, $|j_1\rangle|j_2\rangle$ picks up a phase:
\begin{align}
    |j_1\rangle|j_2\rangle \to e^{i\phi_{j_1j_2}t}|j_1\rangle|j_2\rangle,
\end{align}
where $\phi_{j_1j_2}=-V_{j_1j_2}/\hbar$. The state of the system at a later time $t$ is then given by
\begin{align}\label{time-evolved-two}
    |\Psi(t)\rangle_{12} = \frac{1}{2}\sum_{\substack{j_1=0_1,1_1\\j_2=0_2,1_2}} e^{i\phi_{j_1j_2}t}|j_1\rangle|j_2\rangle.
\end{align}

Note that Newtonian potential is used for the gravitational interaction. One could argue that Newtonian gravity is nonlocal and entanglement generated by such an interaction cannot certify the quantumness of the interaction. However, we could have also examined the time evolution using a linearized quantum gravity approach \cite{Marshman2020, PhysRevD.105.106028, PhysRevLett.130.100202}, where the gravitational interaction is local, and it has been shown that the leading contribution to entanglement generation still arises from the Newtonian potential term. Therefore, we will use, following \cite{Bose2017PRL}, the Newtonian approximation of the gravitational interaction.

The entanglement monotone, concurrence, of any two-qubit pure state $|\beta\rangle_{AB}$ is defined as \cite{PhysRevLett.80.2245}
\begin{align}
    \mathbf{C}[|\beta\rangle_{AB}]:= \sqrt{2\left(1-\Tr \rho_A^2\right)},
\end{align}
where $\rho_A=\Tr_{B}|\beta\rangle_{AB}\langle\beta|$ is the marginal state of $A$, obtained by taking partial trace over $B$. Particularly, for $|\Psi(t)\rangle_{12}$, the concurrence can be explicitly computed as
\begin{align}\label{2body_concurrence}
    \mathbf{C}_{12}(t)=\sqrt{1-\cos^2\left(\frac{\Phi_{12}t}{2}\right)},
\end{align}
where
\begin{align}\label{entangling_phase}
    \Phi_{12}=\vert\phi_{0_11_2}+\phi_{1_10_2}-\phi_{0_10_2}-\phi_{1_11_2}\vert
\end{align}
is called the \textit{entangling phase}. From (\ref{2body_concurrence}) it follows that $|\Psi(t)\rangle_{12}$ is entangled if
\begin{align}
    \Phi_{12}t \neq 2n\pi,~n=0,1,2,\ldots
\end{align}
So we see that the entangling phase completely characterises the concurrence function, and hence, entanglement. Further note that entanglement (concurrence) oscillates periodically with time period $T=\frac{2\pi}{\Phi_{12}}$. In particular, the states becomes product at times $t=\frac{2n\pi}{\Phi_{12}}$ and maximally entangled at times $t=\frac{(2n+1)\pi}{\Phi_{12}}$, for $n=0,1,2,\ldots$. The oscillatory behaviour is generic.

The entangling phase is determined by the setup. Using $\phi_{j_1j_2}=-V_{j_1j_2}/\hbar$ and $V_{j_1j_2} = -\frac{Gm_1m_2}{d_{j_1j_2}}$ in (\ref{entangling_phase}) we get 
\begin{align}
    \Phi_{12}=\frac{Gm_1m_2}{\hbar}\left\vert\frac{1}{d_{0_11_2}}+\frac{1}{d_{1_10_2}}-\frac{1}{d_{0_10_2}}-\frac{1}{d_{1_11_2}}\right\vert.
\end{align}
Note however that different setups may yield the same value for $\Phi_{12}$, and in such cases the corresponding
time-evolved states will exhibit identical entanglement properties.

%--------------------------------------------
\subsection*{Unitary description}
%--------------------------------------------

The dynamics of entanglement generation can be effectively described by unitary time evolution of the system. The unitary time-evolution operator is given by
\begin{align}
    \mathcal{U}_{12}(t)=\exp \left(-\frac{i\mathcal{H}_{12}t}{\hbar}\right),
\end{align}
where
\begin{align}\label{Hamiltonian}
    \mathcal{H}_{12}= - Gm_1m_2\frac{1}{\vert\mathcal{\textbf{x}}_1-\mathcal{\textbf{x}}_2\vert}
\end{align}
is the time-independent Hamiltonian $\mathcal{H}_{12}$ and $\mathcal{\textbf{x}}_1$ and $\mathcal{\textbf{x}}_2$ are the position operators associated with $m_1$ and $m_2$ respectively (also see e.g. Ref. \cite{Miki2021}). Noting that
\begin{align}
    \mathcal{H}_{12}|j_1\rangle|j_2\rangle=V_{j_1j_2}|j_1\rangle|j_2\rangle,~~j_{1(2)}=0_{1(2)},1_{1(2)},
\end{align}
we get
\begin{align}
    \mathcal{U}_{12}(t)|j_1\rangle|j_2\rangle=e^{i\phi_{j_1j_2}t}|j_1\rangle|j_2\rangle,~~j_{1(2)}=0_{1(2)},1_{1(2)},
\end{align}
which shows $~\mathcal{U}_{12}(t)$ is diagonal in the position basis \{$|j_1\rangle|j_2\rangle$\}.

The state of the system at a later time $t$ is now obtained as
\begin{align}
    |\Psi(t)\rangle_{12}&=\mathcal{U}_{12}(t)|\Psi(t=0)\rangle_{12}\nonumber,
\end{align}
which yields Eq. (\ref{time-evolved-two}). The unitary description will be particularly useful in extending the QGEM framework to many-body systems.

%----------------------------------------------------------------------
\section{\label{s3} Three-body QGEM}
%----------------------------------------------------------------------

Let us now consider three test masses $m_1$, $m_2$, and $m_3$ in a general setup [Fig. \ref{qgem_three}]. The initial state is again a tensor product of the spatially superposed states:
\begin{align}\label{three_body_evo}
    |\Psi(t=0)\rangle_{123}=\frac{1}{2\sqrt{2}}\sum_{\substack{j_1=0_1,1_1\\j_2=0_2,1_2\\j_3=0_3,1_3}} |j_1\rangle|j_2\rangle|j_3\rangle.
\end{align}
Let $d_{j_pj_q}$ denote the separation between the states $|j_p\rangle$ and $|j_q\rangle$ for $j_p=0_p,1_p$ and $j_q=0_q,1_q$, where $p\neq q=1,2,3$. As before we assume that $\min_{j_p,j_q} d_{j_pj_q}$ is such that the Casimir-Polder interaction can be neglected.

\begin{figure}[t]
\centering
\includegraphics[width=0.65\textwidth]{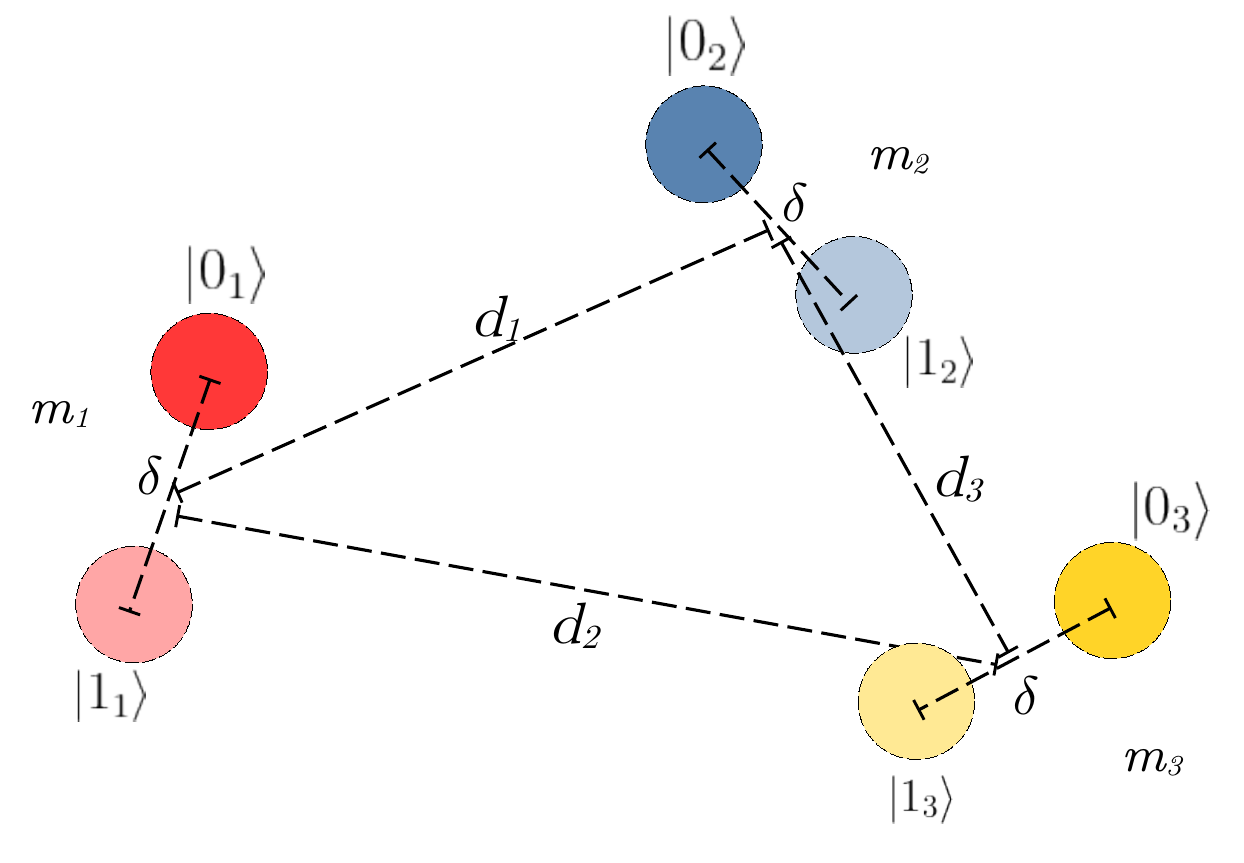}
\caption{\label{qgem_three} Three masses $m_1$, $m_2$ and $m_3$ are each prepared in a superposition of two spatially localised states. The two localised states of mass $m_p$ are denoted as $|0_p\rangle$ and $|1_p\rangle$, where $p=1,2,3$, and they have a separation of $\delta$ between them. The three-body system is arranged in a general setup.}
\end{figure}

The system evolves under mutual gravitational interactions between the test masses according to the equation
\begin{align}\label{3body_time_evolution}
    |\Psi(t)\rangle_{123}=\mathcal{U}_{123}(t)|\Psi(t=0)\rangle_{123},
\end{align}
where $\mathcal{U}_{123}(t)=\exp \left(-\frac{i\mathcal{H}_{123}t}{\hbar}\right)$ is the time-evolution operator and
\begin{align}
    \mathcal{H}_{123}=\mathcal{H}_{12}\otimes\mathcal{I}_{3}+\mathcal{H}_{23}\otimes\mathcal{I}_{1}+\mathcal{H}_{13}\otimes\mathcal{I}_{2}
\end{align}
is the interaction Hamiltonian.
\iffalse
To evaluate the right-hand-side of (\ref{3body_time_evolution}), first we note that
\begin{align}\label{Hamilton}
    \mathcal{H}_{pq}\otimes\mathcal{I}_{r} |j_1\rangle|j_2\rangle|j_3\rangle=V_{j_pj_q}|j_1\rangle|j_2\rangle|j_3\rangle
\end{align}
for $p\neq q\neq r\in\{1,2,3\}$, where $V_{j_pj_q}=-\frac{Gm_pm_q}{d_{j_pj_q}}$. Eq. (\ref{Hamilton}) implies that the operators $\{\mathcal{H}_{pq}\otimes \mathcal{I}_r\}$ are mutually commuting. That means $\mathcal{U}_{123}(t)$ can be decomposed as a product of three mutually commuting unitary operators $~\mathcal{U}_{12}(t)\otimes \mathcal{I}_3$, $~\mathcal{U}_{23}(t)\otimes \mathcal{I}_1$, and $~\mathcal{U}_{13}(t)\otimes \mathcal{I}_2$, where $~\mathcal{U}_{pq}(t)=\exp \left(-\frac{i\mathcal{H}_{pq}t}{\hbar}\right)$. These unitary operators act as
\begin{align}
    \left[\mathcal{U}_{pq}(t)\otimes \mathcal{I}_r\right]|j_1\rangle|j_2\rangle|j_3\rangle=e^{i\phi_{j_pj_q}t}|j_1\rangle|j_2\rangle|j_3\rangle,
\end{align}
where $\phi_{j_pj_q}=-V_{j_pj_q}/\hbar$. 
\fi
Now it is easy to see that the action of $\mathcal{U}_{123}(t)$ on the product basis $\ket{j_1}\ket{j_2}\ket{j_3}$, $j_i\in \lbrace 0_i,1_i\rbrace$, can be written as
\begin{align}\label{3body_unitary}
    \mathcal{U}_{123}(t)|j_1\rangle|j_2\rangle|j_3\rangle=e^{i\phi_{j_1j_2j_3}t}|j_1\rangle|j_2\rangle|j_3\rangle,
\end{align}
where $\phi_{j_1j_2j_3}=\phi_{j_1j_2}+\phi_{j_2j_3}+\phi_{j_1j_3}$. Applying (\ref{3body_unitary}) to (\ref{3body_time_evolution}) and (\ref{three_body_evo}), we therefore obtain
\begin{align}\label{3body_time_evolved_state}
    |\Psi(t)\rangle_{123}=\frac{1}{2\sqrt{2}}\sum_{\substack{j_1=0_1,1_1\\j_2=0_2,1_2\\j_3=0_3,1_3}} e^{i(\phi_{j_1j_2}t +\phi_{j_2j_3}t+\phi_{j_1j_3}t)}|j_1\rangle|j_2\rangle|j_3\rangle.
\end{align}
%\textcolor{blue}{(Comment: What I think that we can condense this part. We write (3.3) as it is and then after some words we directly write (3.7) while mentioning that the local phases can be decomposed in three way. What do you think?)}

It is important to note a key difference in our qubit position basis compared to the standard basis used for qubits like spin-$\frac{1}{2}$ systems. In the case of spin-$\frac{1}{2}$ systems, once the direction of the $z$-axis is fixed, the interpretation of $\ket{\uparrow}$ ($\ket{0}$) and $\ket{\downarrow}$ ($\ket{1}$) is same for all qubits. Any qubit in the state $\dfrac{\ket{0} \pm \ket{1}}{\sqrt{2}}$ is conventionally considered to be oriented along the $\pm x$-axis. However, in our scenario, the definition of spatial states $\ket{0_p}$ and $\ket{1_p}$ for mass $m_p$ is independent and, in general, different from the definition of $\ket{0_q}$ and $\ket{1_q}$ for mass $m_q$, $p\neq q$. Consequently, the superposition orientation of mass $m_p$ in the state $\dfrac{\ket{0_p} + \ket{1_p}}{\sqrt{2}}$ varies with different values of `$p$'. This implies that the pairwise Hamiltonians being simultaneously diagonalizable in the position basis cannot be straightforwardly interpreted as a ``single choice of basis'', in the usual sense of qubits, makes all pairwise Hamiltonians diagonal.

%-----------------------------------------------
\subsection*{Entanglement properties}
%-----------------------------------------------

The time-evolved state $|\Psi(t)\rangle_{123}$ is a three-qubit pure state. The entanglement properties of such a state can be studied by computing entanglement across all three bipartitions $1|23$, $2|13$, and $3|12$. Observe that in each bipartition the state can be expressed as a bipartite pure state in $\mathbb{C}^2\otimes\mathbb{C}^4$, having Schmidt rank at most 2.\footnote{Any bipartite pure state $|\psi\rangle_{AB}\in\mathbb{C}^{d}\otimes\mathbb{C}^{d'}$ can be written in the Schmidt decomposition form as $|\psi\rangle_{AB}=\sum_{i=1}^k \lambda_i|\phi_i\rangle_A|\eta_i\rangle_B$ where $\{|\phi_i\rangle_A\}$ and $\{|\eta_i\rangle_B\}$ are orthogonal vectors in $\mathbb{C}^d$ and $\mathbb{C}^{d'}$ respectively, and $k=\min\{d,d'\}$. The Schmidt rank of the state, which is the number of nonzero Schmidt coefficients $\lambda_i$'s, can be at most $k=\min \{d,d'\}$.} Therefore, only two of the four dimensions in $\mathbb{C}^4$ are in fact necessary. So for our purpose we can effectively consider the system as a pair of qubits that allows us to compute the concurrence exactly.

To compute concurrence across a bipartition $p|qr$, where $p\neq q\neq r\in \{1, 2, 3\}$, we proceed as follows. First we write $|\Psi(t)\rangle_{123}$ [given by (\ref{3body_time_evolved_state})] as
\begin{align}\label{C2C4}
    |\Psi(t)\rangle_{p|qr}=\frac{1}{\sqrt{2}}(|0_p\rangle|\xi_{qr}\rangle+|1_p\rangle|\zeta_{qr}\rangle),
\end{align}
where
\begin{align*}
    |\xi_{qr}\rangle &= \frac{1}{2} \sum_{\substack{j_q=0_q,1_q\\j_r=0_r,1_r}} e^{i\phi_{0_pj_qj_r}t}|j_q\rangle|j_r\rangle,\\
    |\zeta_{qr}\rangle &= \frac{1}{2} \sum_{\substack{j_q=0_q,1_q\\j_r=0_r,1_r}} e^{i\phi_{1_pj_qj_r}t}|j_q\rangle|j_r\rangle.
\end{align*}
Though equation (\ref{C2C4}) is not in the Schmidt form, it can be expressed in the Schmidt form with at most two terms, which is why only two of the four dimensions associated with the state space $\mathbb{C}^4$ of $qr$ are required. This implies that $|\Psi(t)\rangle_{p|qr}$ can be treated as a pure state in $\mathbb{C}^2\otimes\mathbb{C}^2$, the concurrence of which is given by twice the product of the Schmidt coefficients. The Schmidt coefficients are calculated as
\begin{align}
    \lambda_{\pm}(t)=\sqrt{\frac{1}{2}\pm\vert\eta(t)\vert},
\end{align}
where
\begin{align}\label{eta}
    \eta(t)=\frac{1}{8} \sum_{\substack{j_q=0_q,1_q\\j_r=0_r,1_r}} e^{i(\phi_{0_pj_qj_r}-\phi_{1_pj_qj_r})t}.
\end{align}
The concurrence is therefore given by
\begin{align}\label{3concurrence}
    \mathbf{C}_{p|qr}(t)=2\lambda_+\lambda_-=2\sqrt{\frac{1}{4}-\vert\eta(t)\vert^2}.
\end{align}
We will now express $\vert\eta(t)\vert$ as a function of the entangling phases. First we expand (\ref{eta}) as
\begin{align}
    \eta(t)=\frac{1}{8}\left(\sum_{j_q=0_q,1_q} e^{i(\phi_{0_pj_q}-\phi_{1_pj_q})t}\right)\left(\sum_{j_r=0_r,1_r} e^{i(\phi_{0_pj_r}-\phi_{1_pj_r})t}\right).\nonumber
\end{align}
Further simplification gives
\begin{align}\label{eta2}
    \eta(t)=\frac{1}{8}e^{i(\phi_{0_p0_q}- \phi_{1_p0_q}+\phi_{0_p0_r}-\phi_{1_p0_r})t}\left(1+e^{i\Phi_{pq}t}\right)\left(1+e^{i\Phi_{pr}t}\right),
\end{align}
where $\Phi_{ab}$ is the entangling phase associated with qubit-pair $(a,b)$ and defined as
\begin{align}\label{ent}
    \Phi_{ab}=\vert\phi_{0_a1_b}+\phi_{1_a0_b}-\phi_{0_a0_b}-\phi_{1_a1_b}\vert.
\end{align}
From (\ref{eta2}) it follows that
\begin{align}\label{verteta}
    \vert\eta(t)\vert=\frac{1}{8}\vert1+e^{i\Phi_{pq}t}\vert\vert1+e^{i\Phi_{pr}t}\vert
\end{align}
Plugging (\ref{verteta}) in (\ref{3concurrence}) and simplifying we get
\begin{align}\label{3icon}
    \mathbf{C}_{p|qr}(t)=\sqrt{1-\cos^2\left(\frac{\Phi_{pq}t}{2}\right)\cos^2\left(\frac{\Phi_{pr}t}{2}\right)},
\end{align}
Observe that $\mathbf{C}_{p|qr}(t)$ depends only on $\Phi_{pq}$ and $\Phi_{pr}$ associated with mass-pairs $(m_p,m_q)$ and $(m_p,m_r)$.

\paragraph*{Properties of $\mathbf{C}_{p|qr}(t)$:}
\begin{enumerate}
    \item \label{prop1} $\mathbf{C}_{p|qr}(t)$ vanishes whenever both $\cos\left(\frac{\Phi_{pq}t}{2}\right)=\pm 1$ and $\cos\left(\frac{\Phi_{pr}t}{2}\right)=\pm 1$; that is, when
    \begin{align}
        \Phi_{pq}t=2n_{pq}\pi~~\text{and}~~\Phi_{pr}t=2n_{pr}\pi,~~~n_{pq},n_{pr}=0,1,2,\ldots
    \end{align}
    are satisfied simultaneously, which happens when $\frac{\Phi_{pq}}{\Phi_{pr}}$ is a rational number. Note that $\mathbf{C}_{p|qr}(t)=0$ implies $|\Psi(t)\rangle_{p|qr}$ is a product state of the form $|\chi(t)\rangle_p\otimes|\chi'(t)\rangle_{qr}$.

    \item A corollary of the first property is that $\mathbf{C}_{p|qr}(t)>0$ for all $t > 0$ if $~\frac{\Phi_{pq}}{\Phi_{pr}}$ is not a rational number.

    \item $\mathbf{C}_{p|qr}(t)=1$ whenever $\cos\left(\frac{\Phi_{pq}t}{2}\right)=0$ or/and $\cos\left(\frac{\Phi_{pr}t}{2}\right)=0$; that is, when
    \begin{align}
        \Phi_{pq}t=2(n_{pq}+1)\pi~~\text{or/and}~~\Phi_{pr}t=2(n_{pr}+1)\pi,~~~n_{pq},n_{pr}=0,1,2,\ldots
    \end{align}
    Note that $\mathbf{C}_{p|qr}(t)=1$ implies $|\Psi(t)\rangle_{p|qr}$ is maximally entangled across the subspace $\mathbb{C}^2$ of $p$ and the subspace $\mathbb{C}^4$ of $qr$.

    \item $\mathbf{C}_{p|qr}(t)$ exhibits periodic behaviour if
    \begin{align}
        \frac{\Phi_{pq}}{\Phi_{pr}}=\frac{n_{pq}}{n_{pr}}
    \end{align}
    with time period $T=\frac{2n_{pq}\pi}{\Phi_{pq}}=\frac{2n_{pr}\pi}{\Phi_{pr}}$.
    
    \end{enumerate}

Based on the above properties we have the following observations on the three-body entanglement.
\begin{proposition}[Separability]\label{Separability} The system becomes fully separable when the conditions
\begin{align}
    \Phi_{12}t=2n_{12}\pi,\Phi_{23}t=2n_{23}\pi,~\text{and}~\Phi_{13}t=2n_{13}\pi,~~n_{12},n_{23},n_{13}=0,1,2,\ldots
\end{align}
are satisfied simultaneouly. Equivalently, entanglement across all bipartitions will vanish periodically provided the ratios of the entangling phases are rational numbers.
\end{proposition}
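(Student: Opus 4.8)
The plan is to obtain the statement as a direct corollary of Property~\ref{prop1} of $\mathbf{C}_{p|qr}(t)$ together with an elementary structural fact about pure tripartite states. First I would note that the three hypotheses $\Phi_{12}t=2n_{12}\pi$, $\Phi_{23}t=2n_{23}\pi$, $\Phi_{13}t=2n_{13}\pi$ are exactly the conditions that appear pairwise in Property~\ref{prop1}: the bipartition $1|23$ requires $\Phi_{12}t\in 2\pi\mathbb{Z}$ and $\Phi_{13}t\in 2\pi\mathbb{Z}$; the bipartition $2|13$ requires $\Phi_{12}t\in 2\pi\mathbb{Z}$ and $\Phi_{23}t\in 2\pi\mathbb{Z}$; and $3|12$ requires $\Phi_{13}t\in 2\pi\mathbb{Z}$ and $\Phi_{23}t\in 2\pi\mathbb{Z}$. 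Hence, when all three hypotheses hold simultaneously, each of the relevant cosine factors in Eq.~(\ref{3icon}) equals $\pm 1$, so $\mathbf{C}_{1|23}(t)=\mathbf{C}_{2|13}(t)=\mathbf{C}_{3|12}(t)=0$.

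The remaining step is to upgrade ``zero concurrence across every bipartition'' to full separability. By the remark following Property~\ref{prop1}, $\mathbf{C}_{1|23}(t)=0$ forces $|\Psi(t)\rangle_{123}=|\chi\rangle_1\otimes|\chi'\rangle_{23}$ for some single-qubit $|\chi\rangle_1$ and two-qubit $|\chi'\rangle_{23}$. Then $\mathbf{C}_{2|13}(t)=0$ says the reduced state of $m_2$ is pure, which, combined with the product form just obtained, forces $|\chi'\rangle_{23}=|\chi''\rangle_2\otimes|\chi'''\rangle_3$; therefore $|\Psi(t)\rangle_{123}=|\chi\rangle_1\otimes|\chi''\rangle_2\otimes|\chi'''\rangle_3$, i.e.\ the state is fully separable. (Two of the three conditions already suffice for this conclusion; the third is automatically consistent with it.)

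For the ``equivalently'' clause I would argue as follows. If the pairwise ratios $\Phi_{12}/\Phi_{23}$, $\Phi_{13}/\Phi_{23}$ (hence also $\Phi_{12}/\Phi_{13}$) are rational, then clearing denominators yields a positive real $\omega$ and positive integers $k_{12},k_{23},k_{13}$ with $\Phi_{ij}=k_{ij}\,\omega$. Choosing $t_m=2\pi m/\omega$ for $m=1,2,\ldots$ gives $\Phi_{ij}t_m=2\pi m\,k_{ij}\in 2\pi\mathbb{Z}$ for every pair, so the first part applies at each $t_m$ and the state is fully separable there; since $\{t_m\}$ is an arithmetic progression, full separability recurs periodically (the case where some $\Phi_{ij}=0$ is covered by allowing $n_{ij}=0$). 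I do not anticipate a genuine obstacle: all the analytic content is already contained in Property~\ref{prop1} and Eq.~(\ref{3icon}), and the only point demanding care is the logical passage from ``product across every bipartition'' to ``fully separable,'' which for a pure state is immediate by the purity-of-marginals argument sketched above.
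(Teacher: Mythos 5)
Your proposal is correct and follows essentially the same route as the paper, which simply invokes Property~\ref{prop1} of $\mathbf{C}_{p|qr}(t)$ for each of the three bipartitions; your additional details (the purity-of-marginals step promoting ``product across every bipartition'' to full separability, and the explicit recurrence times $t_m=2\pi m/\omega$ for the rational-ratio clause) merely make explicit what the paper leaves implicit.
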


The proof follows from the first property of $\mathbf{C}_{p|qr}(t)$ applied to each bipartition.

\begin{proposition}[Genuine three-body entanglement] \label{prop:3GME} The time-evolution leads to genuine three-body entanglement if and only if at least two of the entangling phases are nonzero.
\end{proposition}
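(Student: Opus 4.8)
The plan is to characterise genuine three-body entanglement directly through the three concurrence functions $\mathbf{C}_{1|23}(t)$, $\mathbf{C}_{2|13}(t)$ and $\mathbf{C}_{3|12}(t)$ supplied by (\ref{3icon}): the time-evolved state is genuinely three-body entangled at time $t$ exactly when all three are strictly positive, so the proposition amounts to deciding when such a $t>0$ exists. I would prove the two implications separately, using only (\ref{3icon}) and the elementary geometry of the triangle.

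For necessity I argue by contraposition. If all three entangling phases vanish, then (\ref{3icon}) gives $\mathbf{C}_{p|qr}(t)\equiv 0$ for every bipartition, so the state is fully separable at all times. Otherwise at most one phase is nonzero, say $\Phi_{ab}\neq 0$, and let $c$ be the third index; since neither $\Phi_{ca}$ nor $\Phi_{cb}$ coincides with $\Phi_{ab}$, both vanish, and (\ref{3icon}) yields $\mathbf{C}_{c|ab}(t)=\sqrt{1-\cos^2(\Phi_{ca}t/2)\cos^2(\Phi_{cb}t/2)}\equiv 0$. Hence $m_c$ is never entangled with the pair $m_am_b$, so the system is never genuinely three-body entangled.

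For sufficiency, suppose at least two entangling phases are nonzero. The key observation is that any two of the three edges of a triangle share a vertex, so the two nonzero phases can be written as $\Phi_{pq}$ and $\Phi_{pr}$ for a common index $p$; consequently every vertex is incident to at least one nonzero phase, which by (\ref{3icon}) means that each of the three bipartite concurrences contains a factor $\cos^2(\Phi t/2)$ with $\Phi\in\{\Phi_{pq},\Phi_{pr}\}$. Now choose any $t$ with $0<t<2\pi/\max(\Phi_{pq},\Phi_{pr})$, so that $\Phi_{pq}t/2$ and $\Phi_{pr}t/2$ both lie in $(0,\pi)$ and hence $\cos^2(\Phi_{pq}t/2)<1$ and $\cos^2(\Phi_{pr}t/2)<1$, while all remaining cosine-squares are at most $1$. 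Each of $\mathbf{C}_{1|23}(t)$, $\mathbf{C}_{2|13}(t)$, $\mathbf{C}_{3|12}(t)$ is then $\sqrt{1-x}$ for some $x\in[0,1)$, i.e.\ strictly positive, so the state is genuinely three-body entangled at this $t$.

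I do not expect a real obstacle; the only point worth stating explicitly is the combinatorial fact that two distinct edges of the triangle necessarily meet at a vertex, since this is precisely what allows a single choice of $t$ to render all three bipartite concurrences positive simultaneously rather than only two of them.
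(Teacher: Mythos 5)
Your proof is correct, and it rests on the same foundation as the paper's: the exact bipartite concurrence formula (\ref{3icon}) and the fact that genuine three-body entanglement of a pure state means nonzero entanglement across all three bipartitions. The difference is one of completeness rather than method. The paper's own proof essentially argues only the necessity direction (if two phases vanish, some bipartition has zero concurrence for all $t>0$) and leaves the sufficiency implicit; you make sufficiency explicit by noting the combinatorial fact that any two edges of the triangle share a vertex, so every mass is incident to at least one nonzero entangling phase, and then exhibiting a concrete time window $0<t<2\pi/\max(\Phi_{pq},\Phi_{pr})$ in which the corresponding cosine-squared factors are strictly less than $1$, making all three concurrences simultaneously positive. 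That explicit construction is a genuine (if modest) improvement in rigour over the published argument, and your contrapositive treatment of the necessity direction (separating the cases of zero and exactly one nonzero phase) matches the paper's reasoning.
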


\begin{proof}
    First note that the state would be genuinely entangled if and only if entanglement is nonzero across all bipartitions. Now, if any two of the three entangling phases are zero, then there exists a bipartition where there is no entanglement at any time $t>0$. So to have nonzero entanglement across every bipartition no more than one entangling phase can be zero.
\end{proof}

It is clear that the time evolution in general will lead to genuine three-body entanglement. However, at a later time $t$ the state could become fully separable [see, Proposition \ref{Separability}]. Then the question is whether this is always the case for any setup. The following proposition answers this question in negative.

\begin{proposition}[Sustainability]
    The state (\ref{3body_time_evolved_state}) is genuine three-body entangled at all times $t > 0$ provided all three entangling phases are nonzero and the ratio of any two entangling phases is not rational.
\end{proposition}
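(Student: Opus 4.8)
The plan is to argue by contradiction, using the explicit formula \eqref{3icon} together with the first property of $\mathbf{C}_{p|qr}(t)$. Recall that a three-qubit pure state is genuine three-body entangled at a given instant $t$ precisely when $\mathbf{C}_{p|qr}(t)>0$ for each of the three bipartitions $1|23$, $2|13$, $3|12$. So I would suppose, for contradiction, that there exists some $t>0$ at which the state \eqref{3body_time_evolved_state} fails to be genuinely entangled, i.e.\ $\mathbf{C}_{p|qr}(t)=0$ for at least one bipartition $p|qr$.

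By \eqref{3icon}, the condition $\mathbf{C}_{p|qr}(t)=0$ forces $\cos^2\!\left(\tfrac{\Phi_{pq}t}{2}\right)=\cos^2\!\left(\tfrac{\Phi_{pr}t}{2}\right)=1$, equivalently $\Phi_{pq}t\in 2\pi\mathbb{Z}$ and $\Phi_{pr}t\in 2\pi\mathbb{Z}$. The next step is to bring in the hypotheses: since $\Phi_{pq}$ and $\Phi_{pr}$ are both nonzero and $t>0$, the products $\Phi_{pq}t$ and $\Phi_{pr}t$ are \emph{strictly} positive, so the integers $n_{pq},n_{pr}$ with $\Phi_{pq}t=2\pi n_{pq}$ and $\Phi_{pr}t=2\pi n_{pr}$ satisfy $n_{pq},n_{pr}\geq 1$. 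Dividing the two relations gives $\Phi_{pq}/\Phi_{pr}=n_{pq}/n_{pr}\in\mathbb{Q}$, contradicting the assumption that the ratio of any two entangling phases is irrational. Hence no such $t$ exists, and $\mathbf{C}_{p|qr}(t)>0$ for all three bipartitions at every $t>0$; in particular the state is never biseparable or fully separable, so it is genuine three-body entangled for all $t>0$, as claimed.

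I do not expect a substantial obstacle here — the argument is essentially a packaging of Property~\ref{prop1}. The only point that needs genuine care is verifying that both indices $n_{pq}$ and $n_{pr}$ are truly positive (this is exactly where the nonvanishing of the entangling phases is used), so that $n_{pq}/n_{pr}$ is a bona fide rational number and the contradiction is valid. Equivalently, and perhaps more transparently, one can phrase the argument without contradiction: the set of ``bad times'' for the bipartition $p|qr$ is $\tfrac{2\pi}{\Phi_{pq}}\mathbb{Z}_{\geq 0}\cap\tfrac{2\pi}{\Phi_{pr}}\mathbb{Z}_{\geq 0}$, which by irrationality of $\Phi_{pq}/\Phi_{pr}$ reduces to $\{0\}$; taking the union of these three sets over the three bipartitions still yields only $t=0$, so positivity of all bipartite concurrences — hence genuine tripartite entanglement — holds throughout $t>0$.
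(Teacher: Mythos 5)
Your proof is correct and follows essentially the same route as the paper's: both rest on the fact (Property~\ref{prop1} and its corollary) that $\mathbf{C}_{p|qr}(t)=0$ forces $\Phi_{pq}t$ and $\Phi_{pr}t$ to be simultaneous integer multiples of $2\pi$, which under the irrational-ratio hypothesis can only occur at $t=0$. Your write-up is merely a more explicit, self-contained packaging of the divisibility argument (including the check that $n_{pq},n_{pr}\geq 1$, where nonvanishing of the phases enters), whereas the paper simply invokes its second listed property for every bipartition.
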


\begin{proof}
    Suppose that one of the entangling phases, say $\Phi_{12}$ is zero. Then we have
    \begin{align}
    \mathbf{C}_{1|23}&=\sqrt{1-\cos^2\left(\frac{\Phi_{13}t}{2}\right)}, \nonumber\\
    \mathbf{C}_{2|13}&=\sqrt{1-\cos^2\left(\frac{\Phi_{23}t}{2}\right)}. \nonumber
    \end{align}
    Here, the system will exhibit genuine entanglement, where entanglement across the bipartitions $1|23$ and $2|13$ will vanish at times $t=\frac{2n_{13}\pi}{\Phi_{13}}$ and $t=\frac{2n_{23}\pi}{\Phi_{23}}$ respectively, for $n_{13}, n_{23} = 0,1,2,\ldots$, and thereby, the system becoming biseparable at those instants. So none of the entangling phases can be zero. But clearly this is not enough. In addition, we require the second property of $\mathbf{C}_{p|qr}(t)$ to hold for all bipartitions. This completes the proof.
\end{proof}

We now show that provided certain conditions are being met the time evolution leads to a maximally entangled GHZ state.

\begin{proposition}[GHZ entanglement]
    The state (\ref{3body_time_evolved_state}) periodically evolves into a maximally entangled GHZ state if any two entangling phases are nonzero and their ratio is that of two odd integers.
\end{proposition}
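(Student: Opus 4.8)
### Proof proposal

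The plan is to show that when two of the three entangling phases, say $\Phi_{pq}$ and $\Phi_{pr}$, are nonzero with $\Phi_{pq}/\Phi_{pr}$ a ratio of two odd integers, there exist times $t>0$ at which $|\Psi(t)\rangle_{123}$ is locally equivalent to the GHZ state $\tfrac{1}{\sqrt2}(|000\rangle+|111\rangle)$ (up to a relabeling of the local bases and global phase), and that this recurs periodically. I would first extract from expression (\ref{3body_time_evolved_state}) the key structural fact: since $\phi_{j_1j_2j_3}=\phi_{j_1j_2}+\phi_{j_2j_3}+\phi_{j_1j_3}$ is a sum of pairwise phases, the state always has the ``hypergraph-free'' form in which the global phase factorizes into contributions from the three edges. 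Concretely, after pulling out an overall phase and absorbing single-qubit phases into local unitaries (which do not affect entanglement), the state is locally equivalent to a diagonal-unitary-evolved GHZ-type state whose nontrivial content is governed only by the three numbers $\Phi_{12}t,\Phi_{23}t,\Phi_{13}t\pmod{2\pi}$.

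The main step is to find $t$ with $\Phi_{pq}t\equiv\pi\pmod{2\pi}$ and $\Phi_{pr}t\equiv\pi\pmod{2\pi}$ simultaneously. Writing $\Phi_{pq}/\Phi_{pr}=a/b$ with $a,b$ odd, I would take $t=\pi b/\Phi_{pr}=\pi a/\Phi_{pq}$ (adjusting by the common factor so both are the ``first'' such time), so that $\Phi_{pq}t=a\pi$ and $\Phi_{pr}t=b\pi$ are both odd multiples of $\pi$, hence both cosines in (\ref{3icon}) vanish. By property 3 of $\mathbf{C}_{p|qr}(t)$ this gives $\mathbf{C}_{p|qr}(t)=1$, i.e. the reduced state of qubit $p$ is maximally mixed. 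The delicate point is that $\mathbf{C}=1$ across two bipartitions does not by itself pin down the state to be GHZ — one must rule out, e.g., that it is a product of a Bell pair with a separate qubit, which is impossible here precisely because $\mathbf{C}_{p|qr}=1$ forces $p$ to be maximally entangled with $qr$, and I then need the third bipartition's value. At this $t$, $\Phi_{qr}t$ is some fixed multiple of $\pi$; I would compute $\mathbf{C}_{q|pr}(t)=\sqrt{1-\cos^2(\tfrac{\Phi_{qr}t}{2})\cos^2(\tfrac{\Phi_{pq}t}{2})}=1$ since $\cos(\tfrac{\Phi_{pq}t}{2})=\cos(\tfrac{a\pi}{2})=0$ for $a$ odd, and similarly $\mathbf{C}_{r|pq}(t)=1$. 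So all three single-qubit marginals are maximally mixed.

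To conclude GHZ, I would invoke the characterization of three-qubit pure states: a state with all three single-qubit reduced states maximally mixed, and which (by the factorized-phase structure above) lies in the GHZ-class rather than the W-class, must be local-unitary equivalent to the GHZ state. Equivalently — and this is the cleanest route — I would substitute $\Phi_{pq}t=a\pi$, $\Phi_{pr}t=b\pi$ (both odd multiples of $\pi$, so $e^{i\Phi_{pq}t}=e^{i\Phi_{pr}t}=-1$) directly into (\ref{3body_time_evolved_state}) via the refactored form and check by inspection that the $8$ amplitudes collapse, after absorbing local phases, to $\tfrac{1}{\sqrt2}(|0_10_20_3\rangle+e^{i\gamma}|1_11_21_3\rangle)$ for some phase $\gamma$, which is LU-equivalent to GHZ. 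Finally, periodicity: since $\Phi_{pq}/\Phi_{pr}$ is rational, property 4 gives a common period $T$, and the GHZ times recur at $t=\frac{(2n+1)\pi a}{\Phi_{pq}}$ within each period, $n=0,1,2,\ldots$.

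I expect the main obstacle to be the last step — verifying that ``all marginals maximally mixed'' plus the pairwise-phase structure genuinely yields GHZ and not merely GHZ-class up to LU, and doing so without a lengthy basis computation. The factorization $\phi_{j_1j_2j_3}=\sum\phi_{j_pj_q}$ is what makes this clean: it guarantees the evolved state is a diagonal-phase modification of the initial GHZ-type superposition with no genuine three-body phase term, so at the special times the state is exactly a relabeled GHZ state; I would present that refactoring explicitly as the crux of the argument.
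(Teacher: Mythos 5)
Your main line is exactly the paper's proof: setting $\Phi_{pq}=a\Phi$, $\Phi_{pr}=b\Phi$ with $a,b$ odd is the paper's parametrisation $\Phi_{12}=(2n_{12}+1)\Phi$, $\Phi_{13}=(2n_{13}+1)\Phi$; your times $t=(2n+1)a\pi/\Phi_{pq}$ are the paper's $\Phi t=(2n+1)\pi$; and the observation that $\cos(\Phi_{pq}t/2)=\cos(\Phi_{pr}t/2)=0$ forces all three concurrences in (\ref{3icon}) to equal $1$ (each bipartition's formula contains at least one of these two phases, and $\Phi_{qr}$, possibly zero, is irrelevant) is precisely the paper's argument, which stops there by identifying ``1 ebit across every bipartition'' with the maximally entangled GHZ state.

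The step you rightly flag as delicate, however, is not settled by the route you call cleanest. At a time with $\Phi_{pq}t\equiv\Phi_{pr}t\equiv\pi\pmod{2\pi}$, the factorisation $\phi_{j_1j_2j_3}=\sum\phi_{j_aj_b}$ shows the state equals, up to local diagonal phases and a global phase, $\mathrm{CZ}_{pq}\,\mathrm{CZ}_{pr}\,\mathrm{CP}_{qr}(\Phi_{qr}t)\,|{+}\rangle|{+}\rangle|{+}\rangle$; unless $\Phi_{qr}t$ is a multiple of $2\pi$, the residual controlled phase $e^{\pm i\Phi_{qr}t}$ sits on the $|1_q\rangle|1_r\rangle$ components and is \emph{not} a product of single-qubit phases, so the eight amplitudes do not collapse to $\tfrac{1}{\sqrt2}\left(|000\rangle+e^{i\gamma}|111\rangle\right)$ by absorbing local phases. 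The state is nonetheless LU-equivalent to GHZ: a direct evaluation of the 3-tangle at these times gives $\tau_{123}=1$ for any value of $\Phi_{qr}t$, or one can note that all three maximally mixed marginals force the reduced pairwise concurrences to vanish, whence $\tau_{123}=1$ by (\ref{monogamy}); but this needs non-diagonal local unitaries or the marginal/3-tangle argument, not inspection of phases. Likewise, your alternative route asserts ``GHZ-class by the factorised-phase structure,'' which is not itself a proof---nonvanishing 3-tangle is what certifies the class. With either repair your argument is complete, and in fact sharper than the paper's, which simply equates unit concurrence across every bipartition with the GHZ state.
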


\begin{proof}
    Let any two entangling phases, say $\Phi_{12}$ and $\Phi_{13}$ be nonzero and $\frac{\Phi_{12}}{\Phi_{13}}=\frac{2n_{12}+1}{2n_{13}+1}$. Therefore, we can write $\Phi_{12}=(2n_{12}+1)\Phi$ and $\Phi_{13}=(2n_{13}+1)\Phi$ for some $\Phi\neq0$. Then we have
    \begin{align}
        \mathbf{C}_{1|23}(t) &= \sqrt{1-\cos^2\left[\frac{(2n_{12}+1)\Phi t}{2}\right]\cos^2\left[\frac{(2n_{13}+1)\Phi t}{2}\right]},\nonumber \\
        \mathbf{C}_{2|13}(t) &= \sqrt{1-\cos^2\left[\frac{(2n_{12}+1)\Phi t}{2}\right]\cos^2\left[\frac{\Phi_{23} t}{2}\right]},\nonumber \\
        \mathbf{C}_{3|12}(t) &= \sqrt{1-\cos^2\left[\frac{\Phi_{23} t}{2}\right]\cos^2\left[\frac{(2n_{13}+1)\Phi t}{2}\right]}.\nonumber
    \end{align}
    Observe that $\mathbf{C}_{1|23}(t)=\mathbf{C}_{2|13}(t)=\mathbf{C}_{3|12}(t)=1$ whenever $\Phi t=(2n+1)\pi$ for $n=0,1,2,\ldots$, which follows from the fact that the product of any two odd integers is also an odd integer. This is also the case even if we let $\Phi_{23}=0$. Therefore, the state (\ref{3body_time_evolved_state}) periodically evolves into a maximally entangled GHZ state with 1 ebit of entanglement across every $\mathbb{C}^2\otimes\mathbb{C}^4$ bipartition.
\end{proof}

%---------------------------------------
\subsection*{Pairwise entanglement and three-body correlations}
%---------------------------------------
A fundamental property of many-body entanglement is that entanglement is monogamous \cite{CKW2000}. Specifically, the distribution of bipartite entanglement in a three-qubit system in a pure state $|\psi\rangle_{123}$ satisfies an equality of the form
%---------------------------------------------
\begin{align}\label{monogamy}
    \tau_{ab}+\tau_{ac}+\tau_{123}=\tau_{a|bc},
\end{align}
%---------------------------------------------
for all $a\neq b\neq c\in\{1,2,3\}$, where $\tau_{a|bc} = \mathbf{C}^2_{a|bc}$, $\tau_{xy}=\mathbf{C}_{xy}^2$ is the tangle where $\mathbf{C}_{xy}=\mathbf{C}(\rho_{xy})$ is the concurrence of the two-qubit reduced density matrix $\rho_{xy}=\Tr_z \left(|\psi\rangle_{xyz}\langle\psi|\right)$, and $\tau_{123}$ is the 3-tangle which is a measure of three-qubit correlations. The monogamy equalities tell us that entanglement cannot be freely shared.

We would like to obtain exact expressions of two-qubit concurrences for they quantify pairwise entanglement and the 3-tangle in our case. The motivation of computing 3-tangle comes from the fact that whenever the state (\ref{3body_time_evolved_state}) is genuinely entangled, a nonzero (zero) value of the 3-tangle would indicate it belongs to the GHZ (W) class \cite{CKW2000}.

The 3-tangle $\tau_{123}$ of (\ref{3body_time_evolved_state}) can be exactly calculated.\footnote{The calculation is tedious but straightforward. We start from the definition \cite{CKW2000}, write down all the terms and combine them to obtain the $f_i$s that are finally expressed as a function of the entangling phases.} The explicit formula is given by
\begin{align}\label{3tangle}
    \tau_{123}(t)=\frac{1}{16}\vert f_1(t)-2f_2(t)+8f_3(t)\vert,
\end{align}
where
\begin{align}
    f_1(t) &= 1+\sum_{p\neq q\neq r} e^{2i(\Phi_{pq}+\Phi_{pr})t},\nonumber\\
    f_2(t) &= \sum_{p\neq q\neq r} e^{i(\Phi_{pq}+\Phi_{pr})t} + \sum_{p\neq q\neq r} e^{i(2\Phi_{pq}+\Phi_{pr}+\Phi_{qr})t},\nonumber\\
    f_3(t) &= \sum_{p<q} e^{i\Phi_{pq}t},\nonumber
\end{align}
where $p,q,r\in\{1,2,3\}$.

Now from (\ref{3tangle}) one finds that
\begin{align}\label{2reduced_concurrence}
    \mathbf{C}_{pq}(t)=\frac{1}{\sqrt{2}}\sqrt{\tau_{p|qr}+\tau_{q|pr}-\tau_{r|pq}-\tau_{123}},
\end{align}
which can be exactly computed using (\ref{3icon}) and (\ref{3tangle}).

So now we have the expressions of all the relevant quantities we require to study the entanglement properties of the time-evolved state $|\Psi(t)\rangle_{123}$. In particular, once we specify a setup, linear, parallel, or otherwise, we could calculate everything that we need to know regarding entanglement and three-body correlations of the system.

%----------------------------------------------------------------------
\section{\label{s4} $N$-body QGEM}
%----------------------------------------------------------------------
Next we extend the treatment to a system of $N$ masses $m_1,m_2,\cdots,m_N$, each initially prepared in a superposed state $|\psi\rangle_p=\frac{1}{\sqrt{2}}\left(|0_p\rangle+|1_p\rangle \right)$ for $p=1,2,\ldots,N$. As before, we do not fix any specific geometry or setup. The initial state of the system is given by
\begin{align}
    |\Psi(t=0)\rangle_{12\ldots N} &= \bigotimes_{p=1}^N |\psi\rangle_p = \frac{1}{2^{N/2}}\sum_{j_1=0_1,1_1} \sum_{j_2=0_2,1_2}\cdots\sum_{j_N=0_N,1_N} |j_1\rangle|j_2\rangle\ldots|j_N\rangle.
\end{align}

The time-evolution of the $N$-body state under the mutual gravitational interactions between the masses is described by
\begin{align}\label{Nbody_time_evolution}
    |\Psi(t)\rangle_{12\ldots N} &= \mathcal{U}_{12\ldots N}(t)~|\Psi(t=0)\rangle_{12\ldots N}\nonumber\\
    &=\exp\left(-\frac{i\mathcal{H}_{12\ldots N}t}{\hbar}\right)|\Psi(t=0)\rangle_{12\ldots N},
\end{align}
where $\mathcal{U}_{12\ldots N}(t)$ is the unitary time-evolution operator generated by the $N$-body interaction Hamiltonian
\begin{align}
    \mathcal{H}_{12\ldots N}=\sum_{p<q=1}^N \mathcal{H}_{pq}\otimes\mathcal{I}.
\end{align}
Here, $\mathcal{H}_{pq}$ is the two-body interaction Hamiltonian (\ref{Hamiltonian}) corresponding to the mass-pair $(m_p,m_q)$, $p,q\in\{1,2,\ldots,N\}$ and $\mathcal{I}$ is the identity operator acting on the tensor product of Hilbert spaces of all the masses except $(m_p,m_q)$.

The operators $\left\{\mathcal{H}_{pq}\otimes\mathcal{I}\right\}$ are mutually commuting as evident from
\begin{align}
    \left(\mathcal{H}_{pq}\otimes\mathcal{I}\right)|j_1\rangle|j_2\rangle\ldots|j_N\rangle =V_{j_pj_q}|j_1\rangle|j_2\rangle\ldots|j_N\rangle,
\end{align}
where $V_{j_pj_q}=-\frac{Gm_pm_q}{d_{j_pj_q}}$, and $d_{j_pj_q}$ is the distance between the states $|j_p\rangle$ and $|j_q\rangle$. Consequently, $\mathcal{U}_{12\ldots N}(t)$ can be decomposed as a product of ${N\choose 2}$ mutually commuting unitary operators as
\begin{align}
    \mathcal{U}_{12\ldots N}(t)=\prod_{p<q=1}^N \left[\mathcal{U}_{pq}(t)\otimes\mathcal{I}\right]=\prod_{p<q=1}^N \exp\left(-\frac{i\mathcal{H}_{pq}t}{\hbar}\right)\otimes\mathcal{I}.
\end{align}
The action of $\mathcal{U}_{12\ldots N}(t)$ is therefore given by 
\begin{align}\label{uni}
    \mathcal{U}_{12\ldots N}(t)~|j_1\rangle|j_2\rangle\ldots|j_N\rangle=e^{i\phi_{j_1j_2\ldots j_N}t}|j_1\rangle|j_2\rangle\ldots|j_N\rangle~~\text{for}~j_p=0_p,1_p,~p=1,2,\ldots,N,
\end{align}
where
\begin{align}\label{bw}
\phi_{j_1j_2\ldots j_N}=\sum_{p<q=1}^N \phi_{j_pj_q}    
\end{align}
and $\phi_{j_pj_q}=-\frac{V_{j_pj_q}}{\hbar}$. Putting (\ref{uni}) in (\ref{Nbody_time_evolution}), we get the expression for the time-evolved $N$-body state as
\begin{align}\label{N-body-time-evolved-state}
     |\Psi(t)\rangle_{12\ldots N} = \frac{1}{2^{N/2}}\sum_{j_1=0_1,1_1} \sum_{j_2=0_2,1_2}\cdots\sum_{j_N=0_N,1_N} e^{i\phi_{j_1j_2\ldots j_N}t}|j_1\rangle|j_2\rangle\ldots|j_N\rangle.
\end{align}

%---------------------------------------
\subsection*{Entanglement properties of the time-evolved $N$-body state}
%---------------------------------------
To explore the entanglement properties of the $N$-qubit pure state $|\Psi(t)\rangle_{12\ldots N}$, we divide the system into two parts: one containing $k$ masses, where $1\leq k \leq \left\lfloor\frac{N}{2}\right\rfloor$, and the other containing $(N-k)$ masses. Note that for each $1\leq k < \frac{N}{2}$, there are ${N\choose k}$ different possible bipartitions, which are obtained through various combinations of the $k$ masses. In the case where $k=\frac{N}{2}$ (applicable when $N$ is even), there are $\frac{1}{2}{N\choose N/2}$ such different bipartitions. The I-concurrence measuring the bipartite entanglement \cite{Rungta2001} across any such bipartition can be computed exactly. 

\begin{theorem}
The I-concurrence of the state (\ref{N-body-time-evolved-state}) for any $k|(N-k)$ bipartition ($1\leq k\leq \left\lfloor\frac{N}{2}\right\rfloor$) is given by
\begin{align}\label{N-icon}
    \mathcal{C}_{p_1p_2\ldots p_k|p_{k+1}\ldots p_N}(t)=\sqrt{\frac{2^k-1}{2^{k-1}}-2^{2-k}\Lambda(t)},
\end{align}
where
\begin{small}
\begin{align}\label{series}
    \Lambda(t)=&\frac{1}{2} \sum_{a=p_1}^{p_k} \left[\prod_{b=p_{k+1}}^{p_N} \cos^2\left(\frac{\Phi_{ab}}{2}t \right)\right]\nonumber\\
    &+\frac{1}{2^2}\sum_{s_1=0}^{1}\sum_{\substack{a_1<a_2,\\a_1,a_2={p_1}}}^{p_k} \left[\prod_{b=p_{k+1}}^{p_N} \cos^2\left(\frac{\Phi_{a_1b}+(-1)^{s_1}\Phi_{a_2b}}{2}t \right)\right]\nonumber\\
    &+\frac{1}{2^3}\sum_{s_1,s_2=0}^1\sum_{\substack{a_1<a_2<a_3\\a_1,a_2,a_3={p_1}}}^{p_k} \left[\prod_{b=p_{k+1}}^{p_N} \cos^2\left(\frac{\Phi_{a_1b}+(-1)^{s_1}\Phi_{a_2b}+(-1)^{s_2}\Phi_{a_3b}}{2}t\right)\right]+\ldots\nonumber\\
    &+\frac{1}{2^k}\sum_{s_1,\ldots ,s_{(k-1)}=0}^1\sum_{\substack{a_1<a_2<\ldots<a_k\\a_1,a_2,\ldots,a_k=p_1}}^{p_k} \left[\prod_{b=p_{k+1}}^{p_N} \cos^2\left(\frac{\Phi_{a_1b}+(-1)^{s_1}\Phi_{a_2b}+\ldots+(-1)^{s_{(k-1)}}\Phi_{a_kb}}{2}t \right)\right],
\end{align}
\end{small}
for $p_1<p_2\ldots<p_k \in \{1,2,\ldots,N\}$ and $p_{k+1}<p_{k+2}\ldots<p_N \in \{1,2,\ldots,N\}\backslash\{p_1,p_2,\ldots,p_k\}$.
\end{theorem}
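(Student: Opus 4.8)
The plan is to reduce the theorem to a purity computation, using that the I-concurrence of a bipartite pure state $\ket{\psi}_{AB}$ is $\sqrt{2\,(1-\Tr\rho_A^2)}$ with $\rho_A=\Tr_B\ket{\psi}\bra{\psi}$ \cite{Rungta2001}; since the two sides of a pure bipartition have equal purity, it suffices to treat the smaller block $A=\{p_1,\ldots,p_k\}$, which is where the dimension $2^k$ enters. For computational-basis strings $\mathbf{j}_A$ on $A$ and $\mathbf{j}_B$ on $B=\{p_{k+1},\ldots,p_N\}$, split the total phase in (\ref{N-body-time-evolved-state}) as $\phi_{j_1\ldots j_N}=\phi^{AA}(\mathbf{j}_A)+\phi^{AB}(\mathbf{j}_A,\mathbf{j}_B)+\phi^{BB}(\mathbf{j}_B)$, sorting the pairwise terms $\phi_{j_pj_q}$ by whether both indices lie in $A$, one in each, or both in $B$. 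Tracing out $B$ gives $(\rho_A)_{\mathbf{j}_A,\mathbf{j}_A'}=2^{-N}\sum_{\mathbf{j}_B}e^{i(\phi_{\mathbf{j}_A\mathbf{j}_B}-\phi_{\mathbf{j}_A'\mathbf{j}_B})t}$, hence $\Tr\rho_A^2=2^{-2N}\sum_{\mathbf{j}_A,\mathbf{j}_A'}\big|\sum_{\mathbf{j}_B}e^{i(\phi_{\mathbf{j}_A\mathbf{j}_B}-\phi_{\mathbf{j}_A'\mathbf{j}_B})t}\big|^2$. In the exponent the $\phi^{BB}$ part cancels between the two strings, and the $\phi^{AA}$ part is a $\mathbf{j}_B$-independent phase that $|\cdot|^2$ removes, so only the cross term $\phi^{AB}(\mathbf{j}_A,\mathbf{j}_B)-\phi^{AB}(\mathbf{j}_A',\mathbf{j}_B)$ survives.

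The structural point is that this cross term, $\sum_{b\in B}\sum_{a\in A}(\phi_{j_aj_b}-\phi_{j_a'j_b})$, is a sum over $b$ of quantities each depending on $j_b$ alone, so the $\mathbf{j}_B$-sum factorizes into a product over $b\in B$ of two-term exponential sums: $\sum_{\mathbf{j}_B}e^{i(\cdots)t}=\prod_{b\in B}\big(\sum_{j_b=0_b,1_b}e^{it\sum_{a\in A}(\phi_{j_aj_b}-\phi_{j_a'j_b})}\big)$. By the elementary identity $|e^{i\alpha}+e^{i\beta}|^2=4\cos^2(\tfrac{\alpha-\beta}{2})$, the modulus squared of the $b$-th factor equals $4\cos^2(\tfrac{t}{2}\theta_b)$ with $\theta_b=\sum_{a\in A}(\phi_{j_a1_b}-\phi_{j_a0_b}-\phi_{j_a'1_b}+\phi_{j_a'0_b})=\sum_{a\in A:\,j_a\neq j_a'}\sigma_a\,\widetilde\Phi_{ab}$, where $\widetilde\Phi_{ab}=\phi_{0_a1_b}+\phi_{1_a0_b}-\phi_{0_a0_b}-\phi_{1_a1_b}$ is the signed entangling phase ($\Phi_{ab}=|\widetilde\Phi_{ab}|$, cf.\ (\ref{ent})) and $\sigma_a$ equals $+1$ or $-1$ according as $(j_a,j_a')=(0_a,1_a)$ or $(1_a,0_a)$. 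Since $|B|=N-k$, pulling out the constant $4^{N-k}/2^{2N}=2^{-2k}$ leaves $\Tr\rho_A^2=2^{-2k}\sum_{\mathbf{j}_A,\mathbf{j}_A'}\prod_{b\in B}\cos^2(\tfrac{t}{2}\theta_b)$.

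What remains is combinatorial. Group the $2^{2k}$ pairs $(\mathbf{j}_A,\mathbf{j}_A')$ by their difference set $D=\{a\in A:j_a\neq j_a'\}$ and, within each $D$, by the sign pattern $(\sigma_a)_{a\in D}$: the summand $\prod_b\cos^2(\tfrac{t}{2}\theta_b)$ depends only on this pattern, each pattern is realized by $2^{k-|D|}$ pairs (the coordinates with $j_a=j_a'$ being free), and flipping all $\sigma_a$ simultaneously leaves every $\cos^2$ unchanged, halving the distinct patterns when $|D|\geq1$. The class $D=\emptyset$ contributes $2^k$, i.e.\ $2^{-k}$ to $\Tr\rho_A^2$; for each $m=1,\ldots,k$ the classes with $|D|=m$ contribute, after the halving, $2^{1-k}\cdot 2^{-m}$ times a sum over $m$-subsets $a_1<\ldots<a_m$ of $A$, a sum over the $2^{m-1}$ relative signs $s_1,\ldots,s_{m-1}$ (with $\sigma_{a_1}$ fixed to $+1$), and $\prod_{b\in B}\cos^2$ of the corresponding signed combination of $\Phi_{a_1b},\ldots,\Phi_{a_mb}$. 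This is precisely $\Tr\rho_A^2=2^{-k}+2^{1-k}\,\Lambda(t)$ with $\Lambda$ the series (\ref{series}); substituting into $\mathcal{C}=\sqrt{2(1-\Tr\rho_A^2)}$ and using $2-2^{1-k}=\tfrac{2^k-1}{2^{k-1}}$ and $2\cdot 2^{1-k}=2^{2-k}$ produces (\ref{N-icon}).

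The main obstacle is this last identification: one must verify that the multiplicities $2^{k-|D|}$, the parity factor $2$, and the overall $2^{-2k}$ assemble into exactly the coefficients $2^{-m}$ that appear layer by layer in (\ref{series}), and one must track signs with care. For $m=1$ the evenness of $\cos^2$ makes the layer depend only on $\Phi_{ab}=|\widetilde\Phi_{ab}|$, but for $m\geq2$ the signs of the $\widetilde\Phi_{ab}$ may vary with $b$, so strictly the multi-index layers are built from the signed combinations $\widetilde\Phi_{a_1b}+(-1)^{s_1}\widetilde\Phi_{a_2b}+\ldots+(-1)^{s_{m-1}}\widetilde\Phi_{a_mb}$ (only the overall sign of each being immaterial, which is what allows fixing $\sigma_{a_1}=+1$); the $\Phi_{ab}$ in (\ref{series}) should be read accordingly. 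One should also note that contributions from distinct difference sets $D$ stay as separate summands, since they attach $\cos^2$ to genuinely different index combinations and do not recombine. Everything upstream — the $\phi^{AA}/\phi^{BB}$ cancellation, the factorization over $B$, the identity $|e^{i\alpha}+e^{i\beta}|^2=4\cos^2(\tfrac{\alpha-\beta}{2})$, and the prefactor arithmetic — is routine.
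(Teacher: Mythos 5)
Your proof is correct and is essentially the paper's own argument (Appendix \ref{app}): both reduce the I-concurrence to the purity of the $k$-mass block, factorize the traced-out $B$-sum into two-term exponentials so that $|e^{i\alpha}+e^{i\beta}|^2=4\cos^2\!\big(\tfrac{\alpha-\beta}{2}\big)$ yields the $\prod_b\cos^2$ factors, and then group the $2^{2k}$ pairs of $k$-bit strings by their difference pattern (your $(D,\sigma)$ bookkeeping is the paper's set of $S_a$'s) with multiplicity $2^{k-|D|}$ and a factor of $2$ from the overall sign flip, assembling the same coefficients $2^{-m}$ in (\ref{series}). Your closing caveat that the $m\geq 2$ layers are strictly built from the signed combinations of $\widetilde\Phi_{ab}$ rather than the absolute values $\Phi_{ab}$ (the paper's worked example tacitly takes these quantities nonnegative, and a $b$-dependent sign is not absorbed by the sum over $s_1,\ldots,s_{m-1}$) is a fair refinement of the statement, but it does not change the route of the proof.
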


The proof is provided in Appendix \ref{app}. Note that there are total ${N \choose 2}$ entangling phases 
\begin{align*}
    \Phi_{xy}=\vert\phi_{0_x1_y}+\phi_{1_x0_y}-\phi_{0_x0_y}-\phi_{1_x1_y}\vert~\text{for}~x<y=1,2,\ldots,N,
\end{align*}
each associated with a mass-pair. These entangling phases depend on the specific geometry and setup of the system. However, the I-concurrence for a given bipartition depends only on the entangling phases of those mass-pairs whose masses are in the alternative parts of the bipartition.

\begin{corollary}
    The I-concurrence for any $1|(N-1)$ (one-vs-rest) bipartition is given by
\begin{align}\label{one-vs-rest}
    \mathcal{C}_{p_1|p_2\ldots p_N}(t)= \sqrt{1-\prod_{b=p_2}^{p_N}\cos^2\left(\frac{\Phi_{p_1b}t}{2}\right)},
\end{align}
where $p_1\in\{1,2,\ldots,N\}$ and $p_2<p_3\ldots<p_N\in\{1,2,\ldots,N\}\backslash\{p_1\}$.
\end{corollary}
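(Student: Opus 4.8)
The plan is to derive the corollary as the $k=1$ case of the theorem, with a short self-contained check available as cross-verification. Putting $k=1$ in (\ref{N-icon}), the prefactors collapse: $\frac{2^k-1}{2^{k-1}}=1$ and $2^{2-k}=2$, so $\mathcal{C}_{p_1|p_2\ldots p_N}(t)=\sqrt{1-2\Lambda(t)}$. In (\ref{series}) with $k=1$ every line except the first involves a sum over strictly increasing multi-indices $a_1<a_2<\cdots$ drawn from the singleton $\{p_1\}$, hence is empty; the surviving first line is $\Lambda(t)=\frac12\sum_{a=p_1}^{p_1}\prod_{b=p_2}^{p_N}\cos^2(\frac{\Phi_{ab}}{2}t)=\frac12\prod_{b=p_2}^{p_N}\cos^2(\frac{\Phi_{p_1 b}}{2}t)$. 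Substituting back gives exactly (\ref{one-vs-rest}).

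For an independent derivation I would proceed directly. First write $|\Psi(t)\rangle_{12\ldots N}$ across the cut $p_1|(\text{rest})$ as $\frac{1}{\sqrt2}(|0_{p_1}\rangle|A_0\rangle+|1_{p_1}\rangle|A_1\rangle)$, where $|A_s\rangle$ ($s=0,1$) is the normalized $(N-1)$-qubit vector obtained from (\ref{N-body-time-evolved-state}) by fixing $j_{p_1}=s_{p_1}$. Since $\rho_{p_1}$ is a single qubit, the I-concurrence of the (pure) global state equals $\sqrt{2(1-\Tr\rho_{p_1}^2)}=\sqrt{1-|\langle A_0|A_1\rangle|^2}$, so the task reduces to computing the overlap. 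By (\ref{bw}) all phase contributions not involving $m_{p_1}$ cancel in $\phi_{1_{p_1}j_2\ldots j_N}-\phi_{0_{p_1}j_2\ldots j_N}$, so $\langle A_0|A_1\rangle$ factorizes over $b\neq p_1$ into single-qubit sums $\sum_{j_b=0_b,1_b}e^{i(\phi_{1_{p_1}j_b}-\phi_{0_{p_1}j_b})t}$; each factor has modulus $2|\cos(\frac{\Phi_{p_1 b}}{2}t)|$ once one identifies the relevant phase combination with $\pm\Phi_{p_1 b}$ through (\ref{ent})---the same step that takes (\ref{eta2}) to (\ref{verteta}) in the three-body analysis. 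Balancing the $2^{-(N-1)}$ normalization against the $N-1$ factors of $2$ yields $|\langle A_0|A_1\rangle|=\prod_{b\neq p_1}|\cos(\frac{\Phi_{p_1 b}}{2}t)|$, and squaring reproduces (\ref{one-vs-rest}).

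There is no substantive obstacle here: the result is a corollary and the work is entirely bookkeeping. The only points requiring care are verifying that the nested multi-index sums in (\ref{series}) genuinely vanish for $k=1$ except for the leading term, and---in the direct route---tracking the signs inside the exponentials and the normalization constants so that the product of cosines emerges with precisely the power $N-1$. It is worth remarking that, consistently with the general comment following the theorem, only the entangling phases $\Phi_{p_1 b}$ incident to $m_{p_1}$ appear: in the graph picture of Section \ref{s4}, the entanglement a single node shares with the rest of the system is governed solely by the edges meeting that node.
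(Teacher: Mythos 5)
Your primary derivation---setting $k=1$ in (\ref{N-icon}) and noting that all but the first term of (\ref{series}) are empty sums over multi-indices drawn from the singleton $\{p_1\}$---is exactly how the paper obtains the corollary from the theorem, and your prefactor and bookkeeping checks are correct. Your supplementary direct computation via $\sqrt{2(1-\Tr\rho_{p_1}^2)}=\sqrt{1-|\langle A_0|A_1\rangle|^2}$ and the factorized overlap is also sound (it mirrors the three-body calculation of Section \ref{s3}), but it is an optional cross-check rather than a different route.
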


With the exact expressions of I-concurrence for every bipartition at our disposal, we can study entanglement distribution in the system for any specified geometry and setup.   

\begin{figure}[t]
\centering
\includegraphics[width=0.75\textwidth]{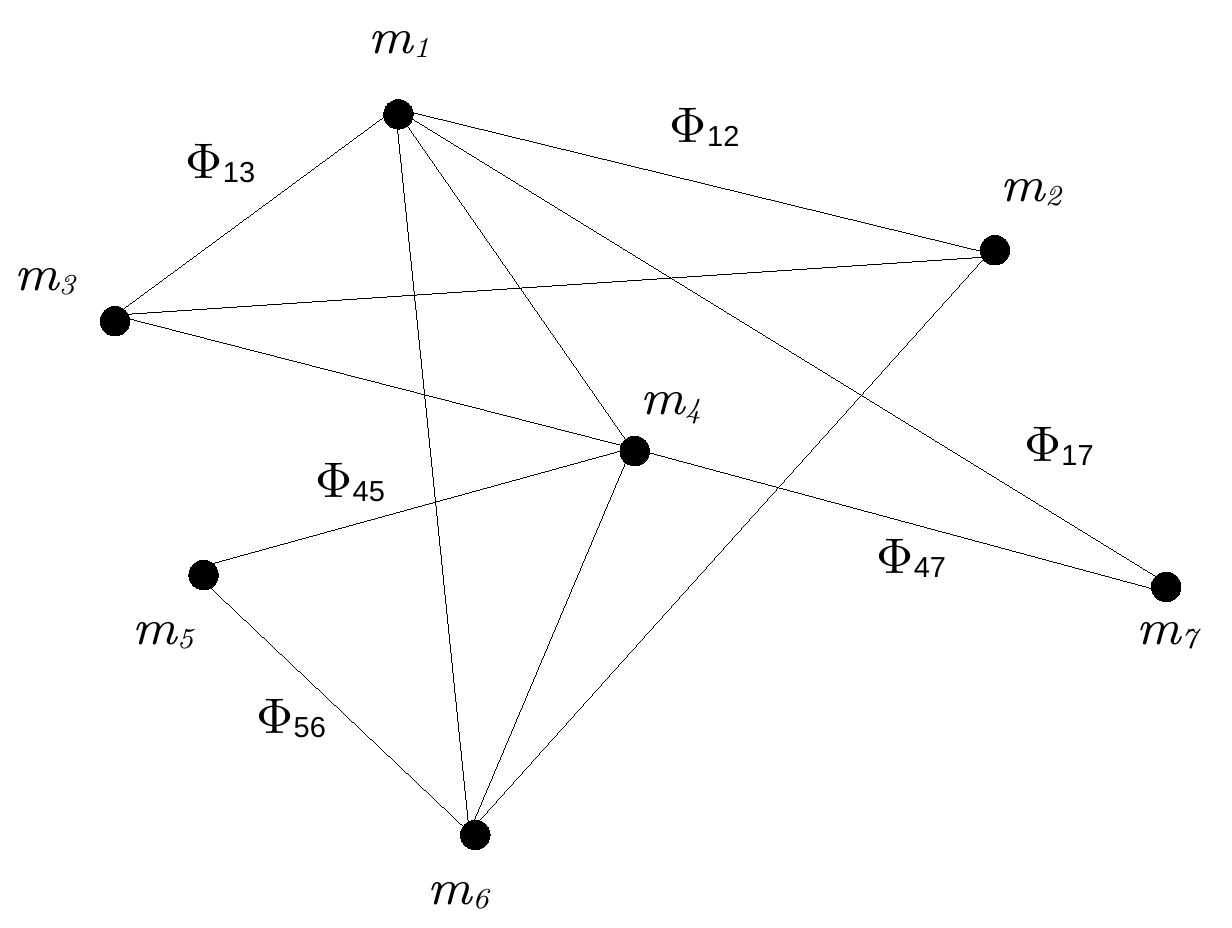}
\caption{\label{qgem_graph} Representation of $N$-body system as a graph. We have considered $N=7$ as an example. The masses $\{m_p\}_{p=1}^7$ form the nodes of the graph. The edge that connects two nodes represents the entangling phase of the corresponding mass-pair. Absence of any edge between two particular nodes implies that the corresponding entangling phase for the mass-pair is 0. For example, $\Phi_{67}=0$ or $\Phi_{25}=0$. }
\end{figure}

\subsubsection*{Graphical representation}
We now represent the $N$-body system as a graph. This representation is very useful as it provides a practical way to calculate the I-concurrence for a given bipartition pictorially. Moreover, it offers a natural framework for discussing gravity-induced many-body entanglement properties.

Consider the system of $N$ masses as a graph $G(V,E)$ \cite{Diestel2017}. Here, $V=\{m_1,m_2,\ldots,m_N\}$ is the set of vertices (or nodes) representing the masses, and $E\subseteq \{\{m_p,m_q\}|m_p,m_q\in V \land m_p\neq m_q\}$ is the set of edges. Each edge connects two nodes and represents the entangling phase associated with the corresponding mass-pair. For example, the edge $\{m_p,m_q\}$ denotes the entangling phase $\Phi_{pq}$ of the mass-pair $(m_p,m_q)$ [see Fig. \ref{qgem_graph}]. The absence of an edge between two nodes implies zero entangling phase for the corresponding mass-pair. We can create a $k|(N-k)$ bipartition of the graph by selecting a set of $k$ nodes $V_k=\{m_{p_1},m_{p_2},\ldots,m_{p_k}\}\subset V$ as one part and the remaining set of $(N-k)$ nodes $V'_k=V-V_k$ as the other part. The set of edges, whose one end-vertex is in $V_k$ and the other end-vertex is in $V'_k$, is denoted as $E(V_k,V'_k)=\{\{m_a,m_b\}|m_a\in V_k,m_b\in V'_k\}\subset E$ \cite{Diestel2017}. Clearly, the I-concurrence (\ref{N-icon}) for this bipartition depends solely on the edges in $E(V_k,V'_k)$.

Let us now demonstrate how this representation can be used to calculate the I-concurrence for a given bipartition, with the help of a specific example: Consider a 6-body system, and we are to calculate the I-concurrence for the $(1,2,5)|(3,4,6)$ bipartition. Accordingly, we define $V_3=\{m_1,m_2,m_5\}$ and $V'_3=\{m_3,m_4,m_6\}$. Now, note that for a given $k|(N-k)$ bipartition, there are $k$ terms in (\ref{series}). Thus, in this case, we need to calculate $3$ terms to obtain $\Lambda(t)$. 

\begin{figure}[t]
    \centering
    \includegraphics[width=\textwidth]{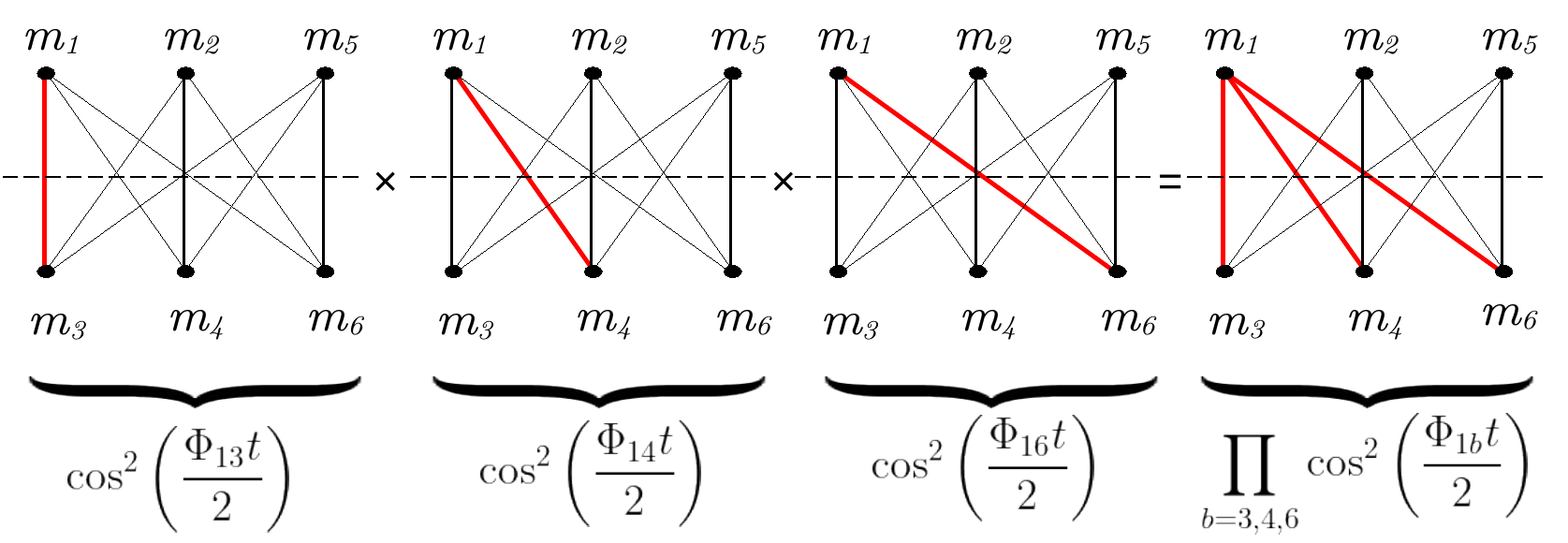}
    \includegraphics[width=0.85\textwidth]{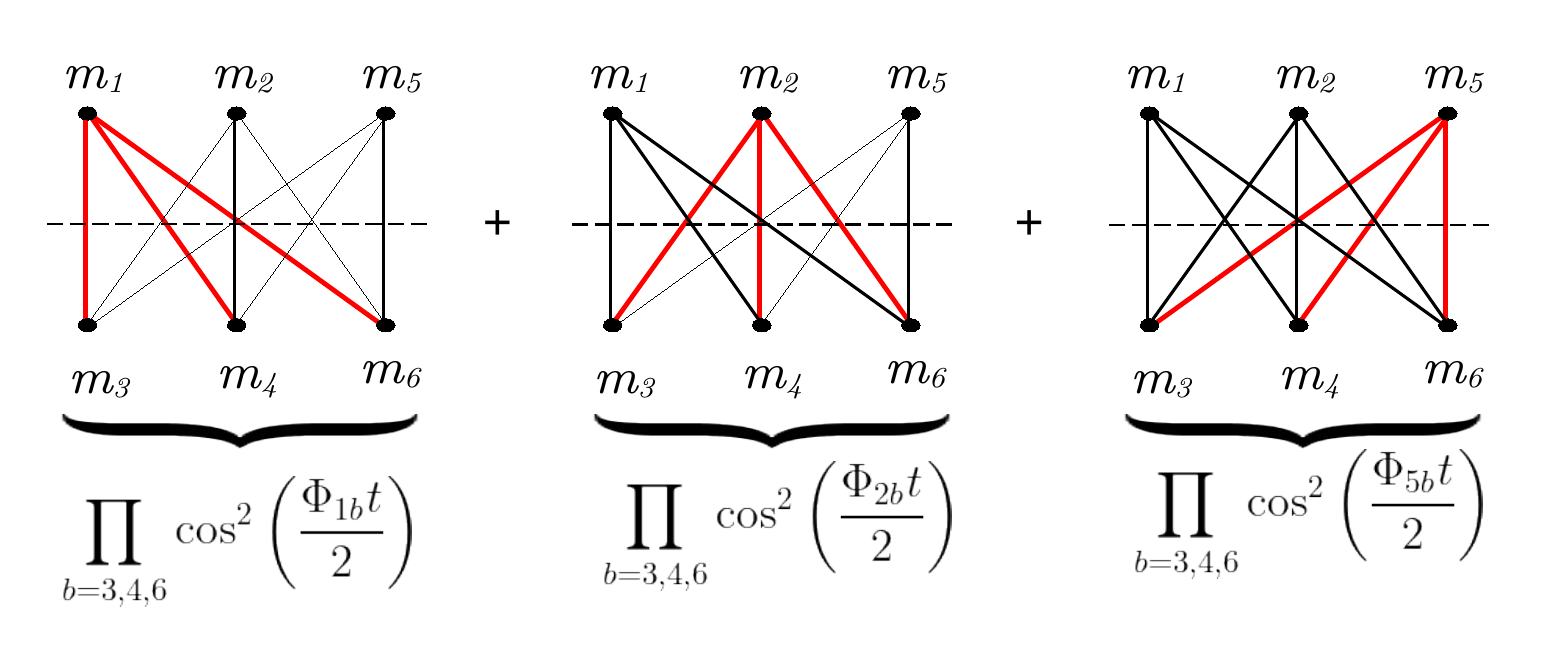}
    \caption{Diagram for calculating the first term in the series for calculating the I-concurrence for the $(1,2,5)|(3,4,6)$ bipartition}
    \label{N-icon-1st}
\end{figure}

\noindent\textit{First term:} Consider the set of edges $E(\{m_1\},V'_3)=\{\{m_1,m_3\},\{m_1,m_4\},\{m_1,m_6\}\}\subset E(V_3,V'_3)$. With each edge $\{m_1,m_b\}\in E(\{m_1\},V'_3)$, associate the function $\cos^2\left(\frac{\Phi_{1b}}{2}t\right)$, for $b=3,4,6$. Take the product of these three functions [see Fig. \ref{N-icon-1st}]. Repeat this process for the sets $E(\{m_2\},V'_3)$ and $E(\{m_5\},V'_3)$. The first term, denoted as $T_1$, is obtained by summing these three quantities (products) with a multiplicative factor of $\frac{1}{2}$:
\begin{align}
    T_1=\frac{1}{2}&\left\lbrace\cos^2 \left(\frac{\Phi_{13}}{2}t\right)\cos^2 \left(\frac{\Phi_{14}}{2}t\right)\cos^2 \left(\frac{\Phi_{16}}{2}t\right)+\cos^2 \left(\frac{\Phi_{23}}{2}t\right)\cos^2 \left(\frac{\Phi_{24}}{2}t\right)\cos^2 \left(\frac{\Phi_{26}}{2}t\right)\right.\nonumber\\
    &+\left.\cos^2 \left(\frac{\Phi_{35}}{2}t\right)\cos^2 \left(\frac{\Phi_{45}}{2}t\right)\cos^2 \left(\frac{\Phi_{56}}{2}t\right)\right\rbrace.
\end{align}

\begin{figure}[t]
    \centering
    \includegraphics[width=\textwidth]{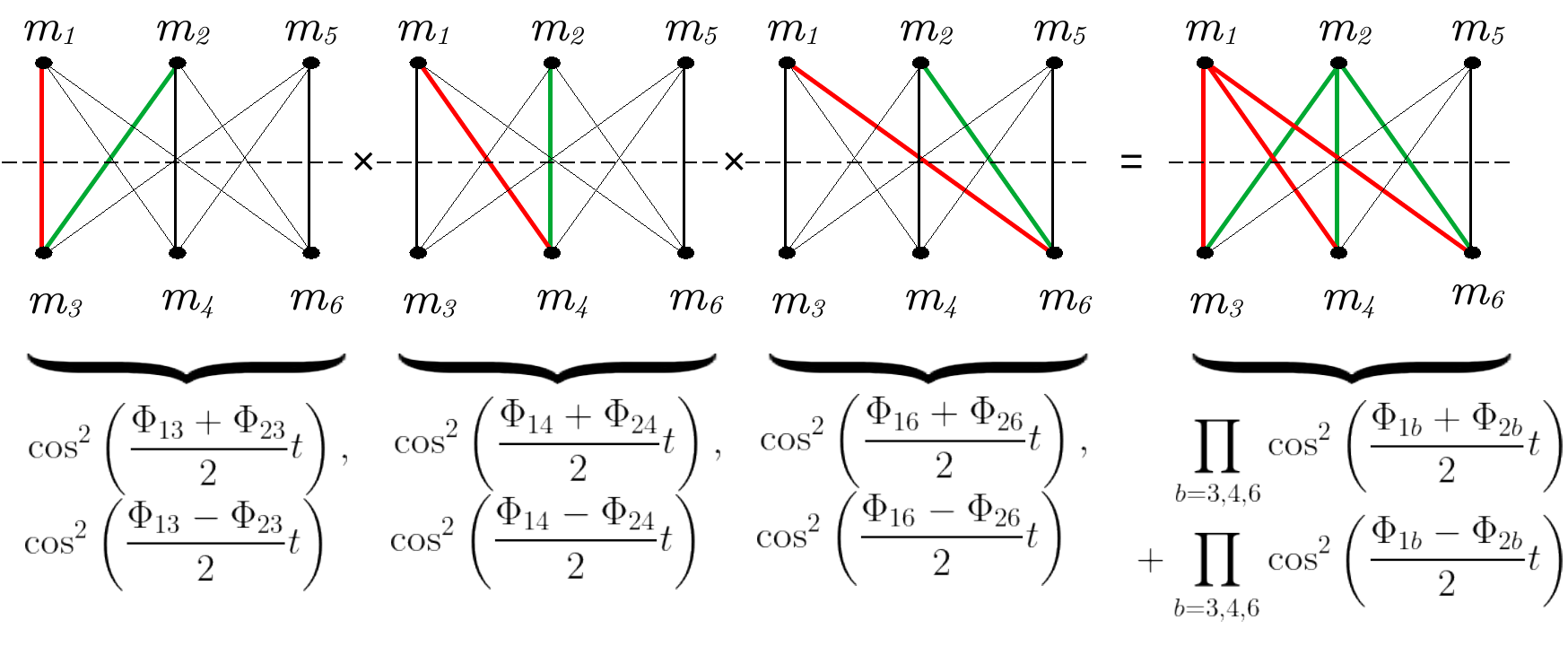} 
    \includegraphics[width=0.85\textwidth]{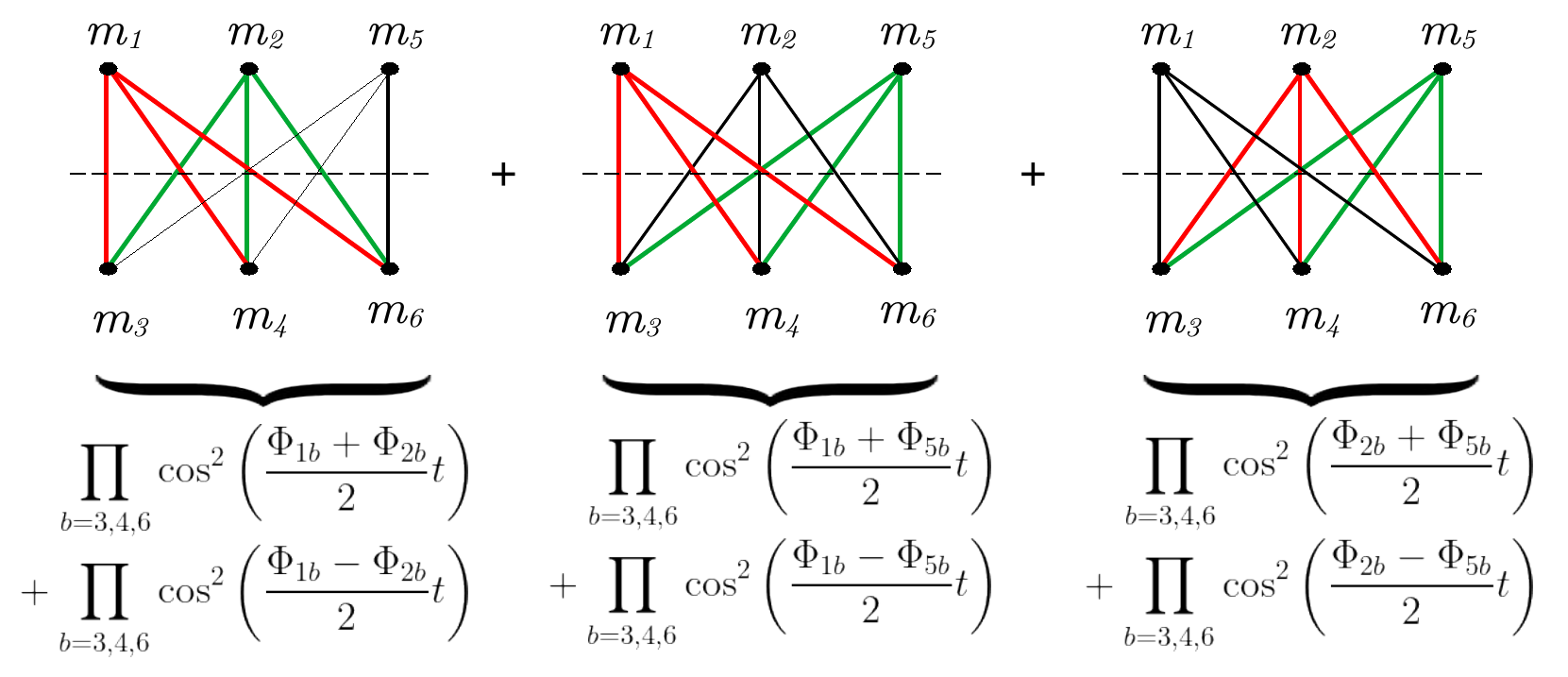}
    \caption{Diagram for calculating the second term in the series for calculating the I-concurrence for the $(1,2,5)|(3,4,6)$ bipartition}
    \label{N-icon-2nd}
\end{figure}

\noindent\textit{Second term:} Now, consider the set of edges $E(\{m_1,m_2\},\{m_b\})$ where $m_b\in V'_3$. Associate two functions with this set: $f_{b,+}=\cos^2\left(\frac{\Phi_{1b}+\Phi_{2b}}{2}t\right)$ and $f_{b,-}=\cos^2\left(\frac{\Phi_{1b}-\Phi_{2b}}{2}t\right)$. Take the product of $f_{b,+}$ for different values of $b=3,4,6$, and do the same for $f_{b,-}$ separately. Then, add the two products: $f_{3,+}f_{4,+}f_{6,+}+f_{3,-}f_{4,-}f_{6,-}$ [see Fig. \ref{N-icon-2nd}]. Repeat the same procedure for the sets $E(\{m_1,m_5\},\{m_b\})$ and $E(\{m_2,m_5\},\{m_b\})$. The second term, denoted as $T_2$, is obtained by summing all these quantities with a multiplicative factor of $\frac{1}{2^2}$:
\begin{align}
    T_2=\frac{1}{2^2}&\left\lbrace\cos^2 \left(\frac{\Phi_{13}+\Phi_{23}}{2}t\right)\cos^2 \left(\frac{\Phi_{14}+\Phi_{24}}{2}t\right)\cos^2 \left(\frac{\Phi_{16}+\Phi_{26}}{2}t\right)\right.\nonumber\\
    &+\left.\cos^2 \left(\frac{\Phi_{13}-\Phi_{23}}{2}t\right)\cos^2 \left(\frac{\Phi_{14}-\Phi_{24}}{2}t\right)\cos^2 \left(\frac{\Phi_{16}-\Phi_{26}}{2}t\right)\right.\nonumber\\
    &\left.+\cos^2 \left(\frac{\Phi_{23}+\Phi_{35}}{2}t\right)\cos^2 \left(\frac{\Phi_{24}+\Phi_{45}}{2}t\right)\cos^2 \left(\frac{\Phi_{26}+\Phi_{56}}{2}t\right)\right.\nonumber\\
    &\left.+\cos^2 \left(\frac{\Phi_{23}-\Phi_{35}}{2}t\right)\cos^2 \left(\frac{\Phi_{24}-\Phi_{45}}{2}t\right)\cos^2 \left(\frac{\Phi_{26}-\Phi_{56}}{2}t\right)\right.\nonumber\\
    &+\left.\cos^2 \left(\frac{\Phi_{35}+\Phi_{13}}{2}t\right)\cos^2 \left(\frac{\Phi_{45}+\Phi_{14}}{2}t\right)\cos^2 \left(\frac{\Phi_{56}+\Phi_{16}}{2}t\right)\right.\nonumber\\
    &+\left.\cos^2 \left(\frac{\Phi_{35}-\Phi_{13}}{2}t\right)\cos^2 \left(\frac{\Phi_{45}-\Phi_{14}}{2}t\right)\cos^2 \left(\frac{\Phi_{56}-\Phi_{16}}{2}t\right)\right\rbrace.
\end{align}

\begin{figure}[t]
    \centering
    \includegraphics[width=\textwidth]{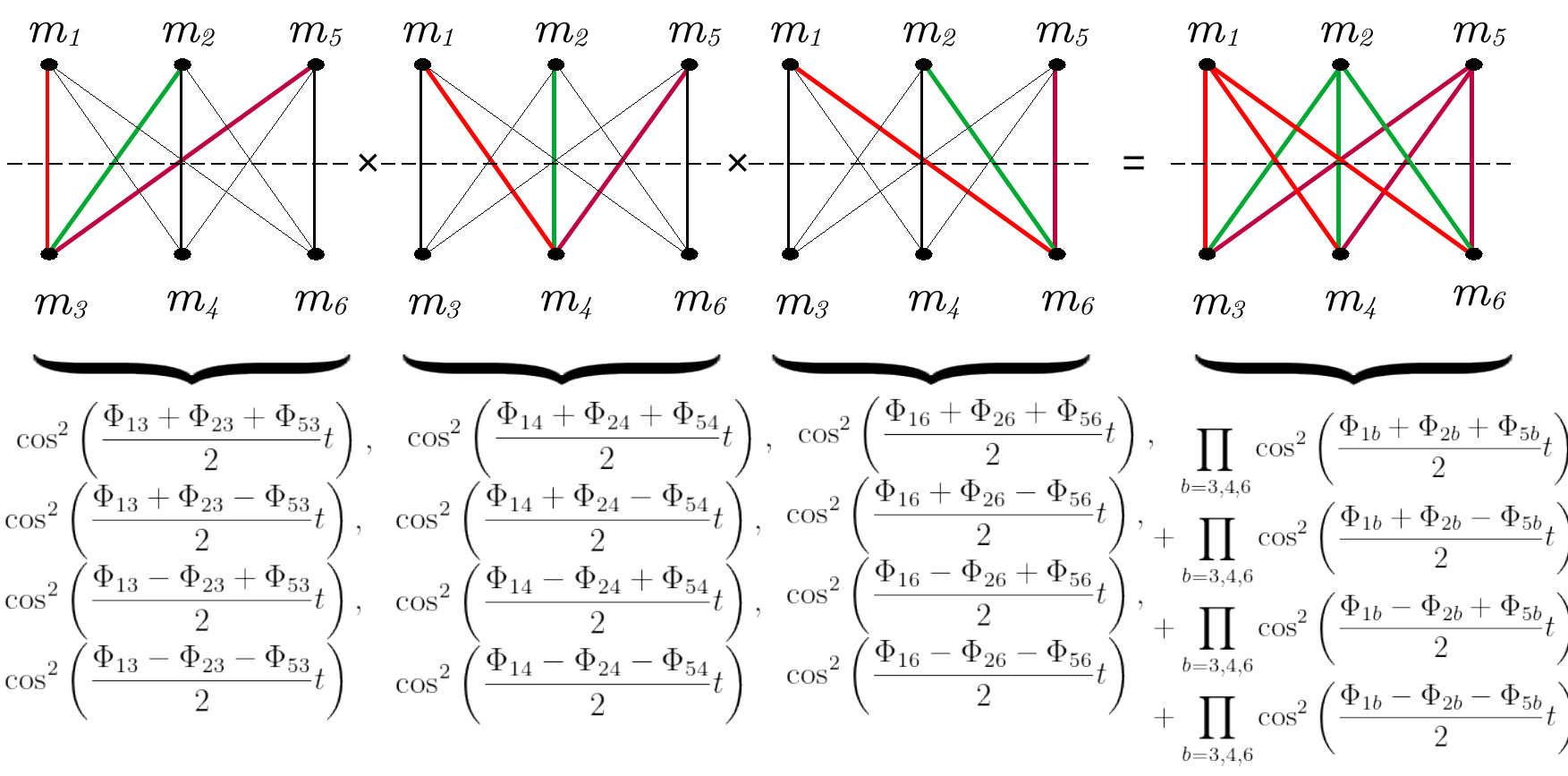}
    \caption{Diagram for calculating the third term in the series for calculating the I-concurrence for the $(1,2,5)|(3,4,6)$ bipartition}
    \label{N-icon-3rd}
\end{figure}

\noindent\textit{Third term:} Now, consider the set $E(V_3,\{m_b\})$ where $m_b\in V'_3$. With this set associate four functions: $\Tilde{f}_{b,++}=\cos^2\left(\frac{\Phi_{1b}+\Phi_{2b}+\Phi_{5b}}{2}t\right)$, $\Tilde{f}_{b,+-}=\cos^2\left(\frac{\Phi_{1b}+\Phi_{2b}-\Phi_{5b}}{2}t\right)$, $\Tilde{f}_{b,-+}=\cos^2\left(\frac{\Phi_{1b}-\Phi_{2b}+\Phi_{5b}}{2}t\right)$, and $\Tilde{f}_{b,--}=\cos^2\left(\frac{\Phi_{1b}-\Phi_{2b}-\Phi_{5b}}{2}t\right)$. Take the product of $\Tilde{f}_{b,++}$ for different values of $b=3,4,6$, and do the same for each of $\Tilde{f}_{b,+-}$, $\Tilde{f}_{b,-+}$ and $\Tilde{f}_{b,--}$ separately. Thereafter, add the four products: $\Tilde{f}_{3,++}\Tilde{f}_{4,++}\Tilde{f}_{6,++}+\Tilde{f}_{3,+-}\Tilde{f}_{4,+-}\Tilde{f}_{6,+-}+\Tilde{f}_{3,-+}\Tilde{f}_{4,-+}\Tilde{f}_{6,-+}+\Tilde{f}_{3,--}\Tilde{f}_{4,--}\Tilde{f}_{6,--}$ [see Fig. \ref{N-icon-3rd}]. The third term is simply given by the resultant quantity with a multiplicative factor of $\frac{1}{2^3}$:
\begin{align}
    T_3=\frac{1}{2^3}&\left\lbrace\cos^2 \left(\frac{\Phi_{13}+\Phi_{23}+\Phi_{35}}{2}t\right)\cos^2 \left(\frac{\Phi_{14}+\Phi_{24}+\Phi_{45}}{2}t\right)\cos^2 \left(\frac{\Phi_{16}+\Phi_{26}+\Phi_{56}}{2}t\right)\right.\nonumber\\
    &+\left.\cos^2 \left(\frac{\Phi_{13}+\Phi_{23}-\Phi_{35}}{2}t\right)\cos^2 \left(\frac{\Phi_{14}+\Phi_{24}-\Phi_{45}}{2}t\right)\cos^2 \left(\frac{\Phi_{16}+\Phi_{26}-\Phi_{56}}{2}t\right)\right.\nonumber\\
    &+\left.\cos^2 \left(\frac{\Phi_{13}-\Phi_{23}+\Phi_{35}}{2}t\right)\cos^2 \left(\frac{\Phi_{14}-\Phi_{24}+\Phi_{45}}{2}t\right)\cos^2 \left(\frac{\Phi_{16}-\Phi_{26}+\Phi_{56}}{2}t\right)\right.\nonumber\\
    &+\left.\cos^2 \left(\frac{\Phi_{13}-\Phi_{23}-\Phi_{35}}{2}t\right)\cos^2 \left(\frac{\Phi_{14}-\Phi_{24}-\Phi_{45}}{2}t\right)\cos^2 \left(\frac{\Phi_{16}-\Phi_{26}-\Phi_{56}}{2}t\right)\right\rbrace.
\end{align}

\noindent The I-concurrence for the considered bipartition is then given by
\begin{align}
    \mathcal{C}_{125|346}(t)=\sqrt{\frac{7}{4}-\frac{1}{2}(T_1+T_2+T_3)}.
\end{align}
We can calculate the I-concurrence for any bipartition of an $N$-body system pictorially using the graph representation in a similar manner.

Now, let us review some general properties of graphs that will be useful to study the entanglement properties of $|\Psi(t)\rangle_{12\ldots N}$. In a graph, two nodes are said to be connected if there exists a path, a sequence of edges, between them. This means that even if two nodes lack a direct edge between them, they can still be considered connected as long as there are intermediate nodes and edges forming a path. The number of edges within a path is called the length of the path. If there is a connection (path) between any two nodes within a given graph, that graph is referred to as a \textit{connected graph} \cite{Diestel2017}. Moreover, if a graph with more than $k$ nodes remains connected after removing any set of fewer than $k$ nodes, it is called \textit{$k$-connected}. It follows that, if a graph is $k$-connected, it is also $k'$-connected for any $k'<k$. The \textit{connectivity} of a graph, denoted as $\kappa(G)$, represents the largest value of $k$ for which the graph is $k$-connected. Connected graphs have a minimum connectivity of 1, i.e., $\kappa(G)\geq 1$, while disconnected graphs have $\kappa(G)=0$.

We now discuss the entanglement properties of the time-evolved $N$-body state (\ref{N-body-time-evolved-state})

\begin{proposition}[Genuine $N$-body entanglement]
    The time-evolution leads to genuine $N$-body entanglement if and only if the graph representing the system of $N$ masses is connected.
\end{proposition}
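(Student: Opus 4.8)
The plan is to read off both directions directly from the structure of the I-concurrence formula $(\ref{N-icon})$. For a bipartition $V_k|V_k'$ write $\mathcal C_{V_k|V_k'}(t)^2 = 2^{2-k}\bigl(\tfrac{2^k-1}{2}-\Lambda(t)\bigr)$, so that the state is entangled across this cut precisely when $\Lambda(t)<\tfrac{2^k-1}{2}$. The observation driving everything is that $\Lambda(t)$ in $(\ref{series})$ is a sum of $k$ blocks, the $j$-th block being $2^{-j}$ times a sum of $2^{j-1}\binom{k}{j}$ products of factors of the form $\cos^2(\cdots)$; since each such factor is at most $1$, the $j$-th block is at most $\tfrac12\binom{k}{j}$, whence $\Lambda(t)\le\tfrac12\sum_{j=1}^{k}\binom{k}{j}=\tfrac{2^k-1}{2}$, with equality \emph{if and only if every $\cos^2$ factor appearing in $(\ref{series})$ equals $1$}. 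In particular, already from the $j=1$ block, $\mathcal C_{V_k|V_k'}(t)=0$ forces $\cos^2\!\bigl(\tfrac{\Phi_{ab}}{2}t\bigr)=1$, i.e. $\Phi_{ab}t\in2\pi\mathbb Z$, for \emph{every} edge $\{m_a,m_b\}$ with $m_a\in V_k$ and $m_b\in V_k'$.

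For the ``only if'' direction, suppose the graph is disconnected, pick a connected component $S$, and consider the bipartition whose parts are $S$ and $V\setminus S$ (taking as the $k$-element part whichever of the two has size $\le\lfloor N/2\rfloor$, with $1\le k\le\lfloor N/2\rfloor$ since $1\le|S|\le N-1$). No edge crosses this cut, so every $\Phi_{ab}$ with $m_a,m_b$ on opposite sides vanishes; then every argument in $(\ref{series})$ is zero, $\Lambda(t)\equiv\tfrac{2^k-1}{2}$, and $\mathcal C_{S|V\setminus S}(t)=0$ for all $t$. Hence $|\Psi(t)\rangle_{12\ldots N}$ is biseparable across this cut at all times and is never genuinely $N$-body entangled.

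For the ``if'' direction, suppose the graph is connected. There are only finitely many bipartitions, and for each bipartition $B$ connectivity furnishes at least one crossing edge $\{m_{a_B},m_{b_B}\}$ with $\Phi_{a_Bb_B}\neq0$. By the observation above, the set of times at which $\mathcal C_B$ vanishes is contained in $\{t>0 : \Phi_{a_Bb_B}t\in2\pi\mathbb Z\}=\tfrac{2\pi}{\Phi_{a_Bb_B}}\mathbb Z$, which is discrete, hence countable. The union of these vanishing sets over the finitely many bipartitions is still countable, so some $t^\ast>0$ lies outside all of them; at $t^\ast$ every bipartition carries nonzero I-concurrence, i.e. $|\Psi(t^\ast)\rangle_{12\ldots N}$ is genuinely $N$-body entangled, as claimed.

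The only step that is not a routine inspection is the ``if'' direction: the premise only delivers, for each bipartition \emph{separately}, times at which it is entangled, whereas genuine $N$-body entanglement requires a \emph{single} time at which \emph{all} bipartitions are entangled simultaneously. This gap is bridged by noting that the zero set of each $\mathcal C_B$ is discrete --- because vanishing forces the commensurability condition $\Phi_{a_Bb_B}t\in2\pi\mathbb Z$ on a fixed \emph{nonzero} phase --- so a countable union of such sets cannot exhaust $(0,\infty)$. Establishing the per-block bound on $\Lambda(t)$ and identifying its equality case is a direct count of the terms in $(\ref{series})$.
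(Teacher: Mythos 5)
Your proposal is correct and follows essentially the same route as the paper: both directions are read off from the structure of the I-concurrence formula, with a disconnected graph yielding a cut with no crossing edges (hence identically zero I-concurrence) and connectivity guaranteeing every cut a crossing nonzero phase. Your version is in fact tighter than the paper's in two places it leaves implicit --- the explicit equality-case analysis showing $\mathcal{C}_{B}(t)=0$ forces $\Phi_{ab}t\in 2\pi\mathbb{Z}$ for every crossing pair, and the discreteness/countability argument producing a single time $t^{\ast}$ at which all bipartitions are simultaneously entangled --- but these are refinements of the same argument rather than a different approach.
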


\begin{proof}
    From Eqs. (\ref{N-icon}) and (\ref{series}), it follows that I-concurrence across a given bipartition is nonzero, iff the entangling phase of at least one mass-pair with masses in the alternative parts of the bipartition is nonzero. In terms of the graphical representation, this implies that $E(V_k,V-V_k)\neq \varnothing$ for the corresponding $V_k|V-V_k$ bipartition of the graph. For genuine $N$-body entanglement, we require nonzero I-concurrence for every bipartition of the system, i.e., $E(U,V-U)\neq \varnothing~\forall~U\subset V$.

    Now suppose the graph $G(V,E)$ representing the system of $N$ masses is disconnected. This means there exist at least two nodes, say $m_p$ and $m_q$, that are not connected. We define the set of all nodes connected to $m_p$ as $S_{m_p}\subset V$, and the set of all nodes connected to $m_q$ as $S_{m_q}\subset V$. Since $m_p$ and $m_q$ are disconnected, we have $S_{m_p}\cap S_{m_q}=\varnothing$ and $E(S_{m_p},S_{m_q})=\varnothing$. If $S_{m_p}\cup S_{m_q}=V$, then we obtain a bipartition $S_{m_p}|S_{m_q}$ for which the I-concurrence is zero at all times $t>0$, leading to the absence of genuine $N$-body entanglement. On the other hand, if $S_{m_p}\cup S_{m_q}\neq V$, then $S'=V-(S_{m_p}\cup S_{m_q})$ contains nodes that are neither connected to $m_p$ nor to $m_q$. Therefore, $S'$ is disjoint from both $S_{m_p}$ and $S_{m_q}$: $S'\cap S_{m_p}=S'\cap S_{m_q}=\varnothing$, and $E(S_{m_p},S')=E(S_{m_q},S')=\varnothing$. Since both $E(S_{m_p},S_{m_q})=\varnothing$ and $E(S_{m_p},S')=\varnothing$, we have $E(S_{m_p},S_{m_q}\cup S')=\varnothing$. Consequently, we get a bipartition $S_{m_p}|(S_{m_q}\cup S')$ for which the I-concurrence vanishes for all times $t>0$, and genuine $N$-body entanglement does not emerge. This shows that for genuine entanglement, the corresponding graph must \textit{necessarily} be connected.

    On the other hand, in a connected graph, there does not exist any bipartition $U|V-U$ such that $E(U,V-U)=\varnothing$. If this were the case, the nodes in $U$ would be disconnected from the nodes in $V-U$, and the graph would be disconnected as well. This implies that in a connected graph $\nexists$ a bipartition for which the I-concurrence is zero at all times $t>0$. This shows that a connected graph is \textit{sufficient} for the creation of genuine entanglement.
\end{proof}

Note that the minimal requirement for a connected graph is that it must be at least 1-connected ($\kappa(G)\geq 1$). Additionally, the minimum number of edges in a 1-connected graph with $N$ nodes is $(N-1)$. Together, this implies that the minimal condition for the generation of genuine $N$-body entanglement is that at least $(N-1)$ entangling phases must be nonzero and the corresponding graph is connected. Importantly, the mere requirement of $(N-1)$ nonzero entangling phases is not sufficient, as the corresponding graph may still not be connected.\footnote{In case of 3-body system, the minimal requirement of 2 nonzero entangling phases (Proposition \ref{prop:3GME}) is sufficient for the corresponding graph to be connected.}

\begin{proposition}[$N$-qubit $GHZ$ entanglement]
    If all ${N\choose 2}$ entangling phases of an $N$-body system are nonzero and the ratio of any two of them is a ratio of two odd integers, then the state (\ref{N-body-time-evolved-state}) periodically becomes a generalised $N$-qubit GHZ state with 1 ebit of entanglement across every bipartition. %The time period of this periodic behaviour is $T=\frac{2\pi}{\Phi}$.
\end{proposition}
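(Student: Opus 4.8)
The plan is to show that at the instants in question the time-evolved state is local-unitary (LU) equivalent to the $N$-qubit GHZ state; the ``$1$ ebit across every bipartition'' part then follows either directly from the I-concurrence formula (\ref{N-icon})--(\ref{series}) or as a consequence of LU-equivalence. \textbf{First} I would extract a common base phase. The hypothesis that every ratio $\Phi_{pq}/\Phi_{p'q'}$ is a quotient of two odd integers, together with all $\Phi_{pq}\neq0$, forces the existence of a $\Phi\neq0$ with $\Phi_{pq}=(2m_{pq}+1)\Phi$ for nonnegative integers $m_{pq}$: fix one phase as reference, write every other phase as the reference times a fraction $u_{pq}/v_{pq}$ in lowest terms (both numerator and denominator then odd), and take $\Phi$ to be the reference divided by $V=\mathrm{lcm}\{v_{pq}\}$, which is odd because an \emph{lcm} of odd numbers is odd. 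Setting $t^\ast=(2n+1)\pi/\Phi$ ($n=0,1,2,\dots$), we then have $\Phi_{pq}t^\ast=(2m_{pq}+1)(2n+1)\pi$, an odd multiple of $\pi$ for \emph{every} mass-pair.

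\textbf{Second}, I would evaluate $\Lambda(t)$ of (\ref{series}) at $t=t^\ast$ for an arbitrary $k|(N-k)$ bipartition. In the $\ell$-th term each cosine argument is $\tfrac12 t^\ast\big(\Phi_{a_1b}\pm\Phi_{a_2b}\pm\cdots\pm\Phi_{a_\ell b}\big)$; since each $\Phi_{a_ib}t^\ast$ is an odd multiple of $\pi$, a signed sum of $\ell$ of them is $\equiv\ell\pi\pmod{2\pi}$, so the argument is $\equiv\ell\pi/2\pmod\pi$. Hence $\cos^2=0$ when $\ell$ is odd (killing the whole product, hence the whole $\ell$-th term) and $\cos^2=1$ when $\ell$ is even, so each surviving even-$\ell$ term attains its maximal value $\tfrac1{2^\ell}\binom{k}{\ell}2^{\ell-1}=\tfrac12\binom{k}{\ell}$. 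Summing, $\Lambda(t^\ast)=\tfrac12\sum_{\ell\ge2,\ \ell\text{ even}}\binom{k}{\ell}=\tfrac12(2^{k-1}-1)$, and plugging into (\ref{N-icon}) gives $\mathcal C(t^\ast)=1$ for every $k$, every bipartition, and every $n$. This is the $1$-ebit statement.

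\textbf{Third}, to upgrade ``$1$ ebit per cut'' to ``generalised GHZ state'' --- strictly more information once $N\ge4$ --- I would expose the structure of (\ref{N-body-time-evolved-state}). With $j_p\in\{0,1\}$, the exponent decomposes as $\phi_{j_1\ldots j_N}=C+\sum_p\nu_p j_p+\sum_{p<q}\sigma_{pq}\,j_pj_q$ with $C,\nu_p$ constants and $\sigma_{pq}=\phi_{0_p0_q}+\phi_{1_p1_q}-\phi_{0_p1_q}-\phi_{1_p0_q}$, so $|\sigma_{pq}|=\Phi_{pq}$ by (\ref{ent}). Here $C$ is a global phase, the piece $\sum_p\nu_p j_p$ factorises over qubits and is stripped off by local phase gates $\mathrm{diag}(1,e^{i\nu_p t})$, and at $t=t^\ast$ one has $\sigma_{pq}t^\ast\equiv\pi\pmod{2\pi}$, so $e^{i\sigma_{pq}t^\ast j_pj_q}=(-1)^{j_pj_q}$. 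Thus, up to a global phase and local unitaries, $|\Psi(t^\ast)\rangle=\big(\prod_{p<q}\mathrm{CZ}_{pq}\big)|+\rangle^{\otimes N}$, the graph state of the complete graph $K_N$. Since one local complementation at the centre of the star graph $K_{1,N-1}$ produces $K_N$, and local complementation acts on graph states by a local-Clifford unitary, while the star graph state is LU-equivalent to $\mathrm{GHZ}_N$ (apply Hadamards to the leaves), $|\Psi(t^\ast)\rangle$ is LU-equivalent to $\mathrm{GHZ}_N$ --- i.e.\ a generalised GHZ state; LU-invariance of bipartite entanglement and the rank-$2$, equal-weight Schmidt form of $\mathrm{GHZ}_N$ across every cut recover the $1$-ebit statement, and as $n$ runs over the nonnegative integers the state returns to this GHZ form at the equally spaced times $t_n^\ast$, i.e.\ periodically.

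\textbf{Main obstacle.} I expect the decisive step to be the third one, not the concurrence computation: the I-concurrence formula hands over the numerical ``$1$ ebit'' claim almost mechanically, but certifying that the state is genuinely of GHZ type requires producing the explicit local-unitary reduction to the complete-graph state and then invoking the standard (but non-elementary) fact that complete graph states are LU-equivalent to GHZ states. A secondary, purely bookkeeping matter is to check carefully that the $\sum_p\nu_p j_p$ piece of $\phi_{j_1\ldots j_N}$ is genuinely a sum of single-qubit contributions, so that discarding it is a legitimate local operation.
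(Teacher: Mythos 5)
Your proposal is correct, and its first two steps coincide with the paper's own argument: the paper likewise reduces every phase to an odd multiple of a common $\Phi$ (though it simply posits $\Phi_{xy}=(2n_{xy}+1)\Phi$ and remarks that this implies the odd-ratio hypothesis, whereas you derive the common base phase from the hypothesis via the lcm of the odd denominators --- a small logical gap in the paper that your step closes), and it evaluates $\Lambda$ at $t=(2n+1)\pi/\Phi$ exactly as you do, with odd-numbered terms vanishing, even-numbered terms contributing $\tfrac12\binom{k}{\ell}$, and the sum collapsing to $(2^{k-1}-1)/2$ so that the I-concurrence equals $1$ for every cut. Where you genuinely depart from the paper is your third step: the paper stops at the I-concurrence computation and simply declares the state to be a generalised GHZ state, while you note (correctly) that purity $1/2$ of every reduced state does not by itself pin down a GHZ-type state for $N\geq 4$, and you supply the missing certification by splitting the phase $\phi_{j_1\ldots j_N}$ into a constant, single-qubit linear terms, and pairwise terms with $|\sigma_{pq}|=\Phi_{pq}$, showing that at $t^{\ast}$ the state is, up to local phase gates and a global phase, the complete-graph state $\prod_{p<q}\mathrm{CZ}_{pq}\ket{+}^{\otimes N}$, which is local-Clifford equivalent to the star graph state and hence LU-equivalent to $\mathrm{GHZ}_N$. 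This buys a strictly stronger and more rigorous justification of the words ``generalised GHZ state'' than the paper's own proof, at the cost of importing standard but non-elementary graph-state machinery (local complementation and its local-Clifford implementation); the paper's route is shorter and self-contained but only establishes the $1$-ebit-per-bipartition property, leaving the GHZ-type identification implicit.
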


\begin{proof}
    Let $\Phi_{xy}=(2n_{xy}+1)\Phi$, $n_{xy}=0,1,2,\ldots$, $\forall x<y\in\{1,2,\ldots,N\}$ and $\Phi\neq0$. This naturally implies that the ratio of any two entangling phases forms a ratio of two odd integers: $\frac{\Phi_{xy}}{\Phi_{x'y'}}=\frac{2n_{xy}+1}{2n_{x'y'}+1}$. Eq. (\ref{series}) then takes the form 
    \begin{small}
    \begin{align}\label{seriesNGHZ}
    \Lambda(t)=&\frac{1}{2} \sum_{a=p_1}^{p_k} \left[\prod_{b=p_{k+1}}^{p_N} \cos^2\left(\frac{(2n_{ab}+1)}{2}\Phi t \right)\right]\nonumber\\
    &+\frac{1}{2^2}\sum_{s_1=0}^{1}\sum_{\substack{a_1<a_2,\\a_1,a_2={p_1}}}^{p_k} \left[\prod_{b=p_{k+1}}^{p_N} \cos^2\left(\frac{(2n_{a_1b}+1)+(-1)^{s_1}(2n_{a_2b}+1)}{2}\Phi t \right)\right]\nonumber\\
    &+\frac{1}{2^3}\sum_{s_1,s_2=0}^1\sum_{\substack{a_1<a_2<a_3\\a_1,a_2,a_3={p_1}}}^{p_k} \left[\prod_{b=p_{k+1}}^{p_N} \cos^2\left(\frac{(2n_{a_1b}+1)+(-1)^{s_1}(2n_{a_2b}+1)+(-1)^{s_2}(2n_{a_3b}+1)}{2}\Phi t\right)\right]+\ldots\nonumber\\
    &+\frac{1}{2^k}\sum_{s_1,\ldots ,s_{(k-1)}=0}^1\sum_{\substack{a_1<a_2<\ldots<a_k\\a_1,a_2,\ldots,a_k=p_1}}^{p_k} \left[\prod_{b=p_{k+1}}^{p_N} \cos^2\left(\frac{(2n_{a_1b}+1)+\ldots+(-1)^{s_{(k-1)}}(2n_{a_kb}+1)}{2}\Phi t \right)\right].
\end{align}
\end{small}

Each of the $k$ terms in the series involves the summation of products of several $\cos^2(\cdot)$ functions. Observe that in the odd-numbered terms (1st, 3rd, 5th, and so on), the argument of each $\cos^2(\cdot)$ function is an odd multiple of $\frac{\Phi t}{2}$. Whereas, in the even-numbered terms (2nd, 4th, 6th, and so on), the argument of each $\cos^2(\cdot)$ function is an even multiple of $\frac{\Phi t}{2}$. These follow from the mathematical properties that the algebraic sum (summation with signs) of an odd number of odd integers always results in an odd integer, while that of an even number of odd integers yields an even integer.\footnote{This can be proved straightforwardly by noting that the sum/difference of two odd integers is an even integer, and the sum/difference of one odd and one even integer is an odd integer.}

Now, for $\Phi t=(2n+1)\pi, n=0,1,2,\ldots$, the arguments of $\cos^2(\cdot)$ functions in the odd-numbered terms become odd multiples of $\frac{\pi}{2}$ and the functions vanish. Arguments of $\cos^2(\cdot)$ functions in the even-numbered terms, on the other hand, become even multiples of $\frac{\pi}{2}$, making the functions 1.

So, at $t=\frac{(2n+1)\pi}{\Phi}, n=0,1,2,\ldots$, all the odd-numbered terms become zero and the even-numbered terms are given by
\begin{align}
    T_l = \frac{1}{2^l}2^{l-1}{k \choose l}=\frac{1}{2}{k\choose l},~~\text{for}~~l=2,4,6,\ldots. \nonumber
\end{align}
Therefore, (\ref{seriesNGHZ}) can be expressed as
\begin{align}
    \Lambda\left[t=\frac{(2n+1)\pi}{\Phi}\right]&=\frac{1}{2}\left\{{k\choose 2}+{k\choose 4}+\ldots+{k\choose {k-1}}\right\},~~\text{if $k$ is odd,}\\
    \text{and,}~~~\Lambda\left[t=\frac{(2n+1)\pi}{\Phi}\right]&=\frac{1}{2}\left\{{k\choose 2}+{k\choose 4}+\ldots+{k\choose {k-2}}+{k\choose k}\right\},~~\text{if $k$ is even.}
\end{align}
However, for both cases, we can write
\begin{align}\label{seriesclosed}
    \Lambda\left[t=\frac{(2n+1)\pi}{\Phi}\right]&=\frac{1}{2}\left\{{{k-1}\choose 1}+{{k-1}\choose 2}+\ldots+{{k-1}\choose {k-2}}+{{k-1}\choose {k-1}}\right\}\nonumber\\
    &=\frac{2^{k-1}-1}{2},
\end{align}
where we have used the following identities
\begin{align*}
    {{k}\choose l}&={{k-1}\choose {l-1}}+{{k-1}\choose l},\\
    {k\choose k}&={{k-1}\choose {k-1}}=1.
\end{align*}

Putting (\ref{seriesclosed}) in (\ref{N-icon}), we get the I-concurrence for any $k|(N-k)$ bipartition at $t=\frac{(2n+1)\pi}{\Phi},n=0,1,2,\ldots$, which is
\begin{align}
    \mathcal{C}_{p_1p_2\ldots p_k|p_{k+1}\ldots p_N}\left(t=\frac{(2n+1)\pi}{\Phi}\right)=1,~~\forall~1\leq k\leq \left\lfloor\frac{N}{2}\right\rfloor.
\end{align}

This implies that the state (\ref{N-body-time-evolved-state}) periodically evolves into a generalised $N$-qubit GHZ state with 1 ebit of entanglement across every bipartition. Time period of this periodic behaviour is $T=\frac{2\pi}{\Phi}$.
\end{proof}

%-------------------------------------------
\subsection*{Multipartite entanglement}
%-------------------------------------------
Full quantification of multipartite entanglement of a generic $N$-body system is hard. This is because with the increasing number of subsystems in a many-body system, the number of independent entanglement measures, each focusing on a different aspect of multipartite entanglement, increases exponentially.

However, since we are interested in multipartite entanglement, a family of multipartite entanglement measures \cite{PhysRevA.69.052330} that are generalisations of the Meyer-Wallach measure \cite{meyer2002global, Brennen2003}, turns out to be particularly useful and relevant for our purpose. These measures are easily computable and applicable to pure states involving any number of qudits ($d$-dimensional quantum systems). Furthermore, they exhibit a close relationship with the previously discussed bipartite entanglement measures.

Specifically, for an $N$-qudit pure state $|\psi\rangle\in(\mathbb{C}^d)^{\otimes N}$, the family (parameterised by $k$) of multipartite entanglement measures can be defined as \cite{PhysRevA.69.052330}:
\begin{align}\label{Qk}
    Q_k=\frac{d^k}{d^k-1}\left(1-\frac{k!(N-k)!}{N!}\sum_{\vert S\vert=k}\Tr\rho_{S}^2\right),~~\text{for}~1\leq k\leq\left\lfloor\frac{N}{2}\right\rfloor,
\end{align}
where $\rho_S=\Tr_{S'}\oper{\psi}$ represents the reduced density matrix of a subsystem consisting of $k$ qudits obtained by tracing out $(N-k)$ qudits, and the sum is taken over all possible reduced $k$-qudit-subsystems of $|\psi\rangle$. For each $k$, $Q_k$ serves as an entanglement monotone, with values in the range $0\leq Q_k\leq 1$, and $Q_k=0$ if and only if $|\psi\rangle$ is a fully separable state. Notably, when $d=2$ and $k=1$, $Q_k$ corresponds to the well-known Meyer-Wallach measure \cite{meyer2002global, Brennen2003}.

Essentially, $Q_k$ measures the average bipartite entanglement, over all possible $k|(N-k)$ bipartitions of $|\psi\rangle$ for a fixed $k$. In terms of I-concurrence (see Eq. \ref{Icon_k|N-k} in Appendix \ref{app} for the mathematical definition of I-concurrence for $k|(N-k)$ bipartition), (\ref{Qk}) can be expressed as
\begin{align}
    Q_k=\frac{d^k}{2(d^k-1)}\frac{1}{{N\choose k}}\sum_{\{p_1,\ldots, p_k\}} \mathcal{C}^2_{p_1\ldots p_k|p_{k+1}\ldots p_N},
\end{align}
where $p_1<p_2<\ldots<p_k\in\{1,2,\ldots,N\}$, $p_{k+1}<\ldots<p_N\in\{1,2,\ldots,N\}\backslash\{p_1,\ldots, p_k\}$, and the sum is over all possible sets of $\{p_1,\ldots,p_k\}$.

So, we can compute $Q_k$ of the state (\ref{N-body-time-evolved-state}) for any $1\leq k\leq\left\lfloor\frac{N}{2}\right\rfloor$ as a function of time and entangling phases by simply setting $d=2$ and substituting (\ref{N-icon}) and (\ref{series}) in the above expression. In particular, the Meyer-Wallach measure of (\ref{N-body-time-evolved-state}) is given by
\begin{align}\label{Q}
    Q_1(t) &=\frac{1}{N} \sum_{p_1=1}^N \mathcal{C}^2_{p_1|p_2\ldots p_N}(t)\nonumber\\
    &= 1-\frac{1}{N}\sum_{p_1=1}^N\left[\prod_{b=p_2}^{p_N}\cos^2\left(\frac{\Phi_{p_1b}t}{2}\right)\right].
\end{align}

\section{\label{s5} Comparing one-vs-rest entanglement for different $N$}

We now turn our attention to the $1|(N-1)$ (one-vs-rest) bipartitions in an $N$-body system. We have already derived the I-concurrence expression for these bipartitions [See (\ref{one-vs-rest})], which measures the amount of bipartite entanglement a single mass shares with the rest $(N-1)$ masses. For $N=2$, this quantity corresponds to the concurrence (\ref{2body_concurrence}) measuring entanglement between two qubits $m_1$ and $m_2$. In the case of $N=3$, it measures the entanglement between a qubit $m_p$ and two qubits $(m_q,m_r)$, for $p\neq q\neq r\in\{1,2,3\}$. Similarly, for $N=4$, it measures the entanglement between one qubit and three qubits, and so forth. 

We wish to compare these I-concurrence values for a fixed bipartition across different values of $N$. Specifically, we aim to understand how the entanglement properties of a particular one-vs-rest bipartition change when we vary the number of masses in the `rest' component, while keeping all other parameters, such as the geometry and superposition orientations of the other masses, unchanged.

Consider a particular $1|(N-1)$ bipartition $m_{p_1}|m_{p_2}\ldots m_{p_N}$ in an $N$-body system ($N\geq 3$). The I-concurrence for this bipartition, at any instant $t>0$, is lower bounded as
\begin{align}
    \mathcal{C}_{p_1|p_2\ldots p_N}(t)\geq \max_{p_k}\left\{\mathcal{C}_{p_1|p_2\ldots p_{k-1}p_{k+1}\ldots p_N}(t)\right\},
\end{align}
where $\mathcal{C}_{p_1|p_2\ldots p_{k-1}p_{k+1}\ldots p_N}(t)$ is the I-concurrence for the $1|(N-2)$ bipartition\\ $m_{p_1}|m_{p_2}\ldots m_{p_{k-1}}m_{p_{k+1}}\ldots m_{p_N}$ of the corresponding $(N-1)$-body system, i.e., the bipartition without the mass $m_{p_k}$ for $2\leq k\leq N$ in the  `rest' part. Importantly, note here that $\mathcal{C}_{p_1|p_2\ldots p_{k-1}p_{k+1}\ldots p_N}(t)$ is the I-concurrence of the time evolved pure state of the $(N-1)$-body system without $m_{p_k}$, for the said bipartition, \textit{not} of the reduced state of the $(N-1)$-body system obtained by tracing $m_{p_k}$.

This relation follows directly from the mathematical expression (\ref{one-vs-rest}) of the I-concurrence:
\begin{align}
     &\prod_{b=p_2}^{p_N} \cos^2 \left(\frac{\Phi_{p_1b}t}{2}\right)\leq\prod_{b=p_2}^{p_{k-1}} \cos^2 \left(\frac{\Phi_{p_1b}t}{2}\right)\prod_{b=p_{k+1}}^{p_{N}} \cos^2 \left(\frac{\Phi_{p_1b}t}{2}\right)~~~\left[\because 0\leq \cos^2 \left(\frac{\Phi_{p_1b}t}{2}\right)\leq1,\forall b\right],\nonumber\\
    &\Rightarrow 1-\prod_{b=p_2}^{p_N} \cos^2 \left(\frac{\Phi_{p_1b}t}{2}\right)\geq 1-\prod_{b=p_2}^{p_{k-1}} \cos^2 \left(\frac{\Phi_{p_1b}t}{2}\right)\prod_{b=p_{k+1}}^{p_{N}} \cos^2 \left(\frac{\Phi_{p_1b}t}{2}\right),\nonumber\\
    &\text{i.e.,}~\mathcal{C}_{p_1|p_2\ldots p_N}(t)\geq\mathcal{C}_{p_1|p_2\ldots p_{k-1}p_{k+1}\ldots p_{N}}(t).\nonumber
\end{align}

Now, note that $\mathcal{C}_{p_1|p_2\ldots p_{k-1}p_{k+1}\ldots p_{N}}(t)=0$ if 
\begin{align}
    \prod_{b=p_2}^{p_{k-1}} \cos^2 \left(\frac{\Phi_{p_1b}t}{2}\right)\prod_{b=p_{k+1}}^{p_{N}} \cos^2 \left(\frac{\Phi_{p_1b}t}{2}\right)=1,\nonumber
\end{align}
that is, when $t=\frac{2n_{p_1b}\pi}{\Phi_{p_1b}}$, for $n_{p_1b}=0,1,2,\ldots$ and $b\in\{p_2,\ldots,p_N\}\backslash\{p_k\}$. However, if at that instant $t$, $\frac{\Phi_{p_1p_k}t}{2}\neq2n_{p_1p_k}\pi$, for some $n_{p_1p_k}=0,1,2,\ldots$, then $\mathcal{C}_{p_1|p_2\ldots p_{N}}(t)\neq0$. This implies that, in general, we observe a longer duration of nonzero entanglement across one-vs-rest bipartitions in $N$-body systems compared to $(N-1)$-body systems.

%----------------------------------------------------------------------
\section{\label{s6} Conclusions}
%----------------------------------------------------------------------
Recent studies have demonstrated that the quantum nature of the gravitational field arising from two (distant) test masses, each existing in a spatially superposed quantum state, can establish entanglement between them.  This phenomenon of entanglement generation has been rigorously investigated within the framework of linearised quantum gravity.  In this paper, we studied this phenomenon of quantum gravity induced entanglement of masses in many-body systems. We considered $N\geq 3$ test masses, each initially prepared in a superposition of two non-overlapping spatially localised states, interacting through their mutual gravity. So in our approach, we have treated each mass as an  ``effective"  qubit in the position basis. For describing the pairwise interaction Hamiltonian we have used Newtonian potential, a function of position observables. To allow for complete generality in our treatment, we maintained an entirely arbitrary setup. In other words, we neither prescribed a specific geometry for the arrangement of the masses nor assumed any particular orientation for the superpositions of the masses.  This means that in our scenario, the definition of two orthogonal spatial states of $i^{\text{th}}$ test mass is
independent of, and generally differs from, the definition of orthogonal spatial states of $j^{\text{th}}$ test mass.  Our primary objective was to explore the characteristics of many-body entanglement arising from the mutual gravitational interactions among the masses.

We studied the entanglement properties of the time-evolved state by computing the entanglement across all possible bipartitions. For each bipartition, we derived an exact expression of I-concurrence, a measure of bipartite entanglement, in terms of the relevant entangling phases. Additionally, we quantified the degree of multipartite entanglement by evaluating a set of generalised Meyer-Wallach measures as functions of the entangling phases.

In the case of a two-body system, the entangling phase serves as the fundamental quantity encapsulating information about the system's configuration and entanglement. Likewise, for $N$-body systems, the collection of entangling phases, each associated with a mass-pair within the system, collectively captures details about the geometry and orientations of the masses. By expressing all entanglement-related quantities in terms of these entangling phases, we achieved a comprehensive treatment that enabled us to draw conclusions for any given system setup.

We examined the case of three masses in detail. The nature of three-body entanglement is characterised by establishing the conditions related to the entangling phases. For instance, we showed that the time-evolved state is genuine three-body entangled if and only if at least two of the three entangling phases are nonzero, and further, if their ratio is that of two odd integers, the state periodically becomes a GHZ-type state (maximally entangled across each bipartition). Moreover, the system remains genuinely entangled at all times $t > 0$, never becoming biseparable if the ratio of no two entangling phases is a rational number. These findings highlight that the characteristics of many-body entanglement significantly depend on the system setup in a non-trivial manner, in contrast to the case of two masses, where entanglement generically oscillates between 0 and 1 ebit, irrespective of the setup.

In the context of generic $N$-body systems, we introduced a diagrammatic approach for calculating the I-concurrence for any given bipartition. To do this, we used graphical representation, with the masses acting as nodes and the nonzero entangling phases forming the edges. We illustrated this method with a simple example. Furthermore, this graphical representation is also valuable in discussing the many-body entanglement characteristics---the $N$-body system can become genuinely entangled provided the corresponding graph is connected.

We demonstrated that if all the entangling phases are nonzero, and the ratio of any two forms the ratio of two odd integers, the time-evolved state periodically becomes a generalised GHZ-type state. However, while the condition of all nonzero entangling phases is sufficient, it may not be strictly necessary. This becomes evident in the three-body scenario where only two of three entangling phases were needed to be nonzero (the third can be zero) for the periodic GHZ state to manifest. The question of whether a generalised GHZ state can be achieved periodically with the minimal condition for genuine many-body entanglement---i.e., requiring $(N-1)$ nonzero entangling phases that correspond to a connected graph---along with the condition that the ratio of any two nonzero entangling phases forms the ratio of two odd integers, remains open.

Another interesting observation is that at any given instant of time, the amount of entanglement across a $1|(N-1)$ bipartition in an $N$-body system is greater than or equal to the amount of entanglement across the $1|(N-2)$ bipartition in the corresponding $(N-1)$-body system obtained by removing a mass. This implies that, in general, we can increase the period of nonzero entanglement hierarchically by adding more masses. For example, when dealing with just two masses, $m_1$ and $m_2$, their entanglement oscillates with a particular time period. However, by introducing a third mass, $m_3$, the entanglement between $m_1$ (or $m_2$) and the other two masses remains nonzero for a longer period of time. Similarly, if we introduce a fourth mass, $m_4$, the entanglement between $m_1$ and $(m_2,m_3,m_4)$ sustains even longer, and so forth. This observation holds practical significance for experiments. The current Stern-Gerlach-based experiment \cite{Bose2017PRL} is hindered by challenges in maintaining masses in spatial superposition for extended durations, limiting the time window for entanglement generation and detection. Future experimental schemes that wish to eliminate this constraint may benefit from the extended periods of nonzero entanglement between masses, facilitating the observation and study of quantum gravity-induced entanglement.

\appendix
\section{I-concurrence of time-evolved $N$-body state for arbitrary $k|(N-k)$ bipartition}
\label{app}
Consider the $k|(N-k)$ bipartition $(m_1,m_2,\ldots,m_k)|(m_{k+1},m_{k+2},\ldots,m_N)$ of the $N$-body system, for any $1\leq k\leq \left\lfloor\frac{N}{2}\right\rfloor$. The state (\ref{N-body-time-evolved-state}) can be written in this bipartition as
\begin{align}
    &|\Psi(t)\rangle_{12\ldots N}\nonumber\\
    &=\frac{1}{2^{\frac{N}{2}}}\left[\sum_{j_{k+1},\ldots,j_N}\left(\sum_{j_{1},\ldots,j_{k}}e^{i\phi_{j_1j_2\ldots j_N}t}|j_1\rangle|j_2\rangle\ldots|j_k\rangle \right) |j_{k+1}\rangle|j_{k+2}\rangle\ldots|j_N\rangle\right],\nonumber\\
    &=\frac{1}{2^{\frac{N}{2}}}\left[\sum_{j_{k+1},\ldots,j_{N}} |\Theta_{j_{k+1}\ldots j_N}(t)\rangle|j_{k+1}\rangle|j_{k+2}\rangle\ldots|j_N\rangle \right],
\end{align}
where, $|\Theta_{j_{k+1}\ldots j_N}(t)\rangle= \sum_{j_{1},\ldots,j_{k}}e^{i\phi_{j_1j_2\ldots j_N}t}|j_1\rangle|j_2\rangle\ldots|j_k\rangle$.

I-concurrence of (\ref{N-body-time-evolved-state}) for this bipartition is given by \cite{Rungta2001}
\begin{align}\label{Icon_k|N-k}
    \mathcal{C}_{12\ldots k|(k+1)\ldots N}(t)=\sqrt{2\left(1-\Tr\rho_{12\ldots k}^2(t)\right)},
\end{align}
where, 
\begin{align}
    \rho_{12\ldots k}(t)&=\Tr_{(k+1),\ldots,N}\left[|\Psi(t)\rangle_{12\ldots N}\langle \Psi(t)|\right] \nonumber\\
    &=\frac{1}{2^N} \left(\sum_{j_{k+1},\ldots,j_N} |\Theta_{j_{k+1}\ldots j_N}(t)\rangle\langle\Theta_{j_{k+1}\ldots j_N}(t)| \right)
\end{align}
is the reduced density matrix of the subsystem containing $k$ masses. 

Now,
\begin{align}
    \Tr\rho^2_{12\ldots k}(t)&=\sum_{j_1,\ldots,j_k}\sum_{j'_1,\ldots,j'_k}\left\vert\langle j_1j_2\ldots j_k|\rho_{12\ldots k}(t)|j'_1j'_2\ldots j'_k\rangle\right\vert^2,\nonumber\\
    &=\frac{1}{2^{2N}}\sum_{j_1,\ldots,j_k}\sum_{j'_1,\ldots,j'_k}\left\vert \sum_{j_{k+1},\ldots,j_N}e^{i\left(\phi_{j_1\ldots j_kj_{k+1}\ldots j_N}-\phi_{j'_1\ldots j'_kj_{k+1}\ldots j_N}\right)t}\right\vert^2,
\end{align}
where $j_a,j'_a\in\{0_a,1_a\}$ for $a=1,2,\ldots,k$, and $|j_1\ldots j_k\rangle=|j_1\rangle|j_2\rangle\ldots|j_k\rangle$. We can write the above expression in a slightly different form as
\begin{align}
    \Tr\rho^2_{12\ldots k}(t) &=\frac{1}{2^{2N}}\sum_{x=0}^{2^k-1}\sum_{y=0}^{2^k-1}\left\vert \sum_{j_{k+1},\ldots,j_N}e^{i\left(\phi_{\mathcal{B}(x)j_{k+1}\ldots j_N}-\phi_{\mathcal{B}(y)j_{k+1}\ldots j_N}\right)t}\right\vert^2,\nonumber\\
    &=\frac{1}{2^{2N}}\sum_{x=0}^{2^k-1}\sum_{y=0}^{2^k-1}\abs{\lambda_{xy}}^2
\end{align}
where $\mathcal{B}(x)=j_1j_2\ldots j_k$ and $\mathcal{B}(y)=j'_1j'_2\ldots j'_k$ are the binary representations of $x,y\in[0,2^k-1]$, respectively, as $k$-bit strings,\footnote{For example, the binary representation of 3 as a 5-bit string is $\mathcal{B}(3)=00011$.} and $\lambda_{xy}=\sum_{j_{k+1}\ldots j_N}e^{i\left(\phi_{\mathcal{B}(x)j_{k+1}\ldots j_N}-\phi_{\mathcal{B}(y)j_{k+1}\ldots j_N}\right)t}$.

Now, note that $\lambda_{xy}^*=\lambda_{yx}$, implying $\abs{\lambda_{xy}}^2=\abs{\lambda_{yx}}^2$. Additionally, $\lambda_{xy}=2^{(N-k)}$, for $x=y$. We can then split $\sum_{x,y}\abs{\lambda_{xy}}^2$ as
\begin{align}
    \sum_{x,y}\abs{\lambda_{xy}}^2&=\sum_{x=y}\abs{\lambda_{xy}}^2+2\sum_{x<y}\abs{\lambda_{xy}}^2\nonumber\\
    &=\sum_{x=0}^{2^k-1}2^{2(N-k)}+2\sum_{x<y}\abs{\lambda_{xy}}^2.\nonumber
\end{align}
Therefore, we get
\begin{align}
    \Tr\rho^2_{12\ldots k}(t) = \frac{1}{2^{2N}}\left(2^{2(N-k)}\times2^k+2\sum_{y=0}^{2^k-1}\sum_{x=0}^{y-1}\left\vert \lambda_{xy}\right\vert^2\right).
\end{align}

Putting the above expression for $\Tr \rho_{12\ldots k}^2(t)$ in (\ref{Icon_k|N-k}), we get
\begin{align}\label{pIcon}
    \mathcal{C}_{12\ldots k|(k+1)\ldots N}(t)= \sqrt{\frac{2^k-1}{2^{k-1}}-\frac{4}{2^{2N}}\left(\sum_{y=0}^{2^k-1}\sum_{x=0}^{y-1}\vert\lambda_{xy}\vert^2\right)}.
\end{align}

Before evaluating $\sum_{y=0}^{2^k-1}\sum_{x=0}^{y-1}\abs{\lambda_{xy}}^2$, let us consider a simple example with $N=6$ and $k=3$ to get a typical idea of $\abs{\lambda_{xy}}^2$. For this case, $\lambda_{01}$ (say) is given as
\begin{align}
    \lambda_{01}&=\sum_{j_4,j_5,j_6} e^{i(\phi_{\mathcal{B}(0)j_4j_5j_6}-\phi_{\mathcal{B}(1)j_4j_5j_6})t} =\sum_{j_4,j_5,j_6} e^{i(\phi_{0_10_20_3j_4j_5j_6}-\phi_{0_10_21_3j_4j_5j_6})t} \nonumber\\
    &=e^{i(\phi_{0_10_20_30_40_50_6}-\phi_{0_10_21_30_40_50_6})t}\left(\sum_{j_4,j_5,j_6} e^{i(\phi_{0_10_20_3j_4j_5j_6}-\phi_{0_10_21_3j_4j_5j_6}-\phi_{0_10_20_30_40_50_6}+\phi_{0_10_21_30_40_50_6})t}\right). \nonumber
\end{align}
Now, using the definition $\phi_{j_1j_2\ldots j_N}=\sum_{p<q=1}^N \phi_{j_pj_q}$ [See Eq. (\ref{bw})], we get
\begin{align}
    \phi_{0_10_20_3j_4j_5j_6}-\phi_{0_10_21_3j_4j_5j_6} =&~ \phi_{0_10_2}+\phi_{0_10_3}+\phi_{0_1j_4}+\phi_{0_1j_5}+\phi_{0_1j_6}+\phi_{0_20_3}+\phi_{0_2j_4}+\phi_{0_2j_5}+\phi_{0_2j_6}\nonumber\\
    &+\phi_{0_3j_4}+\phi_{0_3j_5}+\phi_{0_3j_6}+\phi_{j_4j_5}+\phi_{j_4j_6}+\phi_{j_5j_6}\nonumber\\    
    &-\phi_{0_10_2}-\phi_{0_11_3}-\phi_{0_1j_4}-\phi_{0_1j_5}-\phi_{0_1j_6}-\phi_{0_21_3}-\phi_{0_2j_4}-\phi_{0_2j_5}-\phi_{0_2j_6}\nonumber\\
    &-\phi_{1_3j_4}-\phi_{1_3j_5}-\phi_{1_3j_6}-\phi_{j_4j_5}-\phi_{j_4j_6}-\phi_{j_5j_6}\nonumber\\
    =&~\phi_{0_10_3}+\phi_{0_20_3}+\phi_{0_3j_4}+\phi_{0_3j_5}+\phi_{0_3j_6}\nonumber\\
    &-\phi_{0_11_3}-\phi_{0_21_3}-\phi_{1_3j_4}-\phi_{1_3j_5}-\phi_{1_3j_6},\nonumber\\
    \phi_{0_10_20_30_40_50_6}-\phi_{0_10_21_30_40_50_6} =&~ \phi_{0_10_3}+\phi_{0_20_3}+\phi_{0_30_4}+\phi_{0_30_5}+\phi_{0_30_6}\nonumber\\
    &-\phi_{0_11_3}-\phi_{0_21_3}-\phi_{1_30_4}-\phi_{1_30_5}-\phi_{1_30_6},\nonumber\\
    \text{and,}~~ \phi_{0_10_20_3j_4j_5j_6}-\phi_{0_10_21_3j_4j_5j_6}&-\phi_{0_10_20_30_40_50_6}+\phi_{0_10_21_30_40_50_6} \nonumber\\
    = &~\phi_{0_3j_4}+\phi_{0_3j_5}+\phi_{0_3j_6}-\phi_{1_3j_4}-\phi_{1_3j_5}-\phi_{1_3j_6}\nonumber\\
    &-\phi_{0_30_4}-\phi_{0_30_5}-\phi_{0_30_6}+\phi_{1_30_4}+\phi_{1_30_5}+\phi_{1_30_6}.
\end{align}
Now, for $j_4=0_4,j_5=0_5,j_6=0_6$
\begin{align}
    \phi_{0_10_20_30_40_50_6}-\phi_{0_10_21_30_40_50_6}-\phi_{0_10_20_30_40_50_6}+\phi_{0_10_21_30_40_50_6}=0, \nonumber
\end{align}
for $j_4=0_4,j_5=0_5,j_6=1_6$
\begin{align}
    \phi_{0_10_20_30_40_51_6}-\phi_{0_10_21_30_40_51_6}-\phi_{0_10_20_30_40_50_6}&+\phi_{0_10_21_30_40_50_6}\nonumber\\
    &=\phi_{0_31_6}+\phi_{1_30_6}-\phi_{0_30_6}-\phi_{1_31_6}\nonumber\\
    &=\Phi_{36}, \nonumber
\end{align}
for $j_4=0_4,j_5=1_5,j_6=0_6$
\begin{align}
    \phi_{0_10_20_30_41_50_6}-\phi_{0_10_21_30_41_50_6}-\phi_{0_10_20_30_40_50_6}&+\phi_{0_10_21_30_40_50_6}\nonumber\\
    &=\phi_{0_31_5}+\phi_{1_30_5}-\phi_{0_30_5}-\phi_{1_31_5}\nonumber\\
    &=\Phi_{35}, \nonumber
\end{align}
for $j_4=0_4,j_5=1_5,j_6=1_6$
\begin{align}
    \phi_{0_10_20_30_41_51_6}-\phi_{0_10_21_30_41_51_6}-\phi_{0_10_20_30_40_50_6}&+\phi_{0_10_21_30_40_50_6}\nonumber\\
    &=\phi_{0_31_5}+\phi_{1_30_5}-\phi_{0_30_5}-\phi_{1_31_5}\nonumber\\
    &+\phi_{0_31_6}+\phi_{1_30_6}-\phi_{0_30_6}-\phi_{1_31_6}\nonumber\\
    &=\Phi_{35}+\Phi_{36}, \nonumber
\end{align}
for $j_4=1_4,j_5=0_5,j_6=0_6$
\begin{align}
    \phi_{0_10_20_31_40_50_6}-\phi_{0_10_21_31_40_50_6}-\phi_{0_10_20_30_40_50_6}&+\phi_{0_10_21_30_40_50_6}\nonumber\\
    &=\phi_{0_31_4}+\phi_{1_30_4}-\phi_{0_30_4}-\phi_{1_31_4}\nonumber\\
    &=\Phi_{34}, \nonumber
\end{align}
for $j_4=1_4,j_5=0_5,j_6=1_6$
\begin{align}
    \phi_{0_10_20_31_40_51_6}-\phi_{0_10_21_31_40_51_6}-\phi_{0_10_20_30_40_50_6}&+\phi_{0_10_21_30_40_50_6}\nonumber\\
    &=\phi_{0_31_4}+\phi_{1_30_4}-\phi_{0_30_4}-\phi_{1_31_4}\nonumber\\
    &+\phi_{0_31_6}+\phi_{1_30_6}-\phi_{0_30_6}-\phi_{1_31_6}\nonumber\\
    &=\Phi_{34}+\Phi_{36}, \nonumber
\end{align}
for $j_4=1_4,j_5=1_5,j_6=0_6$
\begin{align}
    \phi_{0_10_20_31_41_50_6}-\phi_{0_10_21_31_41_50_6}-\phi_{0_10_20_30_40_50_6}&+\phi_{0_10_21_30_40_50_6}\nonumber\\
    &=\phi_{0_31_4}+\phi_{1_30_4}-\phi_{0_30_4}-\phi_{1_31_4}\nonumber\\
    &+\phi_{0_31_5}+\phi_{1_30_5}-\phi_{0_30_5}-\phi_{1_31_5}\nonumber\\
    &=\Phi_{34}+\Phi_{35}, \nonumber
\end{align}
and for $j_4=1_4,j_5=1_5,j_6=1_6$
\begin{align}
    \phi_{0_10_20_31_41_51_6}-\phi_{0_10_21_31_41_51_6}-\phi_{0_10_20_30_40_50_6}&+\phi_{0_10_21_30_40_50_6}\nonumber\\
    &=\phi_{0_31_4}+\phi_{1_30_4}-\phi_{0_30_4}-\phi_{1_31_4}\nonumber\\
    &+\phi_{0_31_5}+\phi_{1_30_5}-\phi_{0_30_5}-\phi_{1_31_5}\nonumber\\
    &+\phi_{0_31_6}+\phi_{1_30_6}-\phi_{0_30_6}-\phi_{1_31_6}\nonumber\\
    &=\Phi_{34}+\Phi_{35}+\Phi_{36}, \nonumber
\end{align}
So,
\begin{align}
    \lambda_{01}&=e^{i(\phi_{0_10_20_30_40_50_6}-\phi_{0_10_21_30_40_50_6})t}\left( 1+e^{i\Phi_{34}t}+e^{i\Phi_{35}t}+e^{i\Phi_{36}t}+e^{i(\Phi_{34}+\Phi_{35})t}+e^{i(\Phi_{34}+\Phi_{36})t}\right.\nonumber\\
    &+\left.e^{i(\Phi_{35}+\Phi_{36})t}+e^{i(\Phi_{34}+\Phi_{35}+\Phi_{36})t}\right)\nonumber\\
    &=e^{i(\phi_{0_10_20_30_40_50_6}-\phi_{0_10_21_30_40_50_6})t}\left( 1+e^{i\Phi_{34}t}\right)\left( 1+e^{i\Phi_{35}t}\right)\left( 1+e^{i\Phi_{36}t}\right),\nonumber\\
    |\lambda_{01}|^2&=\left|1+e^{i\Phi_{34}t}\right|^2\left\vert1+e^{i\Phi_{35}t}\right\vert^2\left\vert 1+e^{i\Phi_{36}t}\right\vert^2\nonumber\\
    &=4\cos^2\left(\frac{\Phi_{34}}{2}t\right)4\cos^2\left(\frac{\Phi_{35}}{2}t\right)4\cos^2\left(\frac{\Phi_{36}}{2}t\right).
\end{align}
Similarly, it can be shown that
\begin{align}
    &\lambda_{24}\nonumber\\
    &=\sum_{j_4,j_5,j_6} e^{i(\phi_{0_11_20_3j_4j_5j_6}-\phi_{1_10_20_3j_4j_5j_6})t} \nonumber\\
    &=e^{i(\phi_{0_11_20_30_40_50_6}-\phi_{1_10_20_30_40_50_6})t}\left( 1+e^{i(\Phi_{14}-\Phi_{24})t}\right)\left( 1+e^{i(\Phi_{15}-\Phi_{25})t}\right)\left( 1+e^{i(\Phi_{16}-\Phi_{26})t}\right),\nonumber\\
    &\left\vert\lambda_{24}\right\vert^2\nonumber\\
    &=4\cos^2\left(\frac{\Phi_{14}-\Phi_{24}}{2}t\right)4\cos^2\left(\frac{\Phi_{15}-\Phi_{25}}{2}t\right)4\cos^2\left(\frac{\Phi_{16}-\Phi_{26}}{2}t\right).
\end{align}

For arbitrary $N$ and $k$, we can calculate $\left\vert\lambda_{xy}\right\vert^2$ for any $x<y\in[0,2^k-1]$ using the following algorithm. The correctness of this algorithm can be verified through detailed case-by-case calculations, as demonstrated in the preceding discussion.\\
For a given $\lambda_{xy}=\sum_{j_{k+1},\ldots,j_N}e^{i(\phi_{\mathcal{B}(x)}j_{k+1}\ldots j_N-\phi_{\mathcal{B}(y)}j_{k+1}\ldots j_N)t}$, consider the $a$-th bit $j_a$ of $\mathcal{B}(x)$ and $j'_a$ of $\mathcal{B}(y)$, where $1\leq a\leq k$. Define $S_a=j'_a-j_a$. Therefore, $S_a=+1$ if $j_a=0_a,~j'_a=1_a$, $S_a=-1$ if $j_a=1_a,~j'_a=0_a$, and $S_a=0$ if $j_a=j'_a=0_a,1_a$. So, for a given $\lambda_{xy}$, we get a set of $k$ `$S_a$'s. It is important to note that the condition $x<y$ makes the first nonzero element of the set to be $+1$.\footnote{This is because, the first nonzero element $S_a$ of the set corresponds to the first pair of $(j_a,j'_a)$, for which $j'_a\neq j_a$, and $\mathcal{B}(y)>\mathcal{B}(x)\Rightarrow j'_a>j_a$, $\therefore S_a=+1$.}
 $\left\vert\lambda_{xy}\right\vert^2$ is then given by,
\begin{align}
    \left\vert\lambda_{xy}\right\vert^2 = 2^{2(N-k)}\prod_{b=k+1}^N \cos^2\left(\frac{\sum_{a=1}^{k} S_a\Phi_{ab}}{2}t \right).
\end{align}

For example, in the $N=6,k=3$ case, the set of `$S_a$'s for $\lambda_{24}$ is $\{S_1=+1,S_2=-1,S_3=0\}$,
    \begin{table}[h!]
        \centering
        \begin{tabular}{|c||c c c|}
            \hline
             $\mathcal{B}(2)$&$0_1$&$1_2$&$0_3$ \\
             $\mathcal{B}(4)$&$1_1$&$0_2$&$0_3$ \\
             \hline
             &$S_1=+1$,&$S_2=-1$,&$S_3=0$\\
             \hline
        \end{tabular}
    \end{table}
    \\
and $\left\vert\lambda_{24}\right\vert^2$ is given by
\begin{align}
    &\left\vert\lambda_{24}\right\vert^2\nonumber\\
    &= 2^6\prod_{b=4}^6 \cos^2\left(\frac{\sum_{a=1}^{3} S_a\Phi_{ab}}{2}t \right) \nonumber\\
    &= 4\cos^2\left(\frac{\Phi_{14}-\Phi_{24}}{2}t \right) 4\cos^2\left(\frac{\Phi_{15}-\Phi_{25}}{2}t \right) 4\cos^2\left(\frac{\Phi_{16}-\Phi_{26}}{2}t \right). \nonumber
\end{align}

Now, note that different $\lambda_{xy}$s, may have the same set of `$S_a$'s, and hence same values of $\left\vert\lambda_{xy}\right\vert^2$. For example, $\lambda_{24}$ and $\lambda_{35}$ have the same set of `$S_a$'s,
\begin{table}[h!]
    \centering
    \begin{tabular}{|c||c c c|c|c||c c c|}
        \hline
        $\mathcal{B}(2)$&$0_1$&$1_2$&$0_3$ && $\mathcal{B}(3)$&$0_1$&$1_2$&$1_3$ \\
        $\mathcal{B}(4)$&$1_1$&$0_2$&$0_3$ && $\mathcal{B}(5)$&$1_1$&$0_1$&$1_3$ \\
        \hline
         &$S_1=+1$,&$S_2=-1$,&$S_3=0$ && &$S_1=+1$,&$S_2=-1$,&$S_3=0$\\
        \hline
    \end{tabular}
\end{table}
\\
and hence, $\left\vert\lambda_{24}(t)\right\vert^2=\left\vert\lambda_{35}(t)\right\vert^2$.
In particular, for a set of `$S_a$'s with $k'\leq k$ nonzero elements, there exists $2^{(k-k')}$ different $\lambda_{xy}$s that have the same set \footnote{Note here that, when we are saying two sets are same, order of the elements of the set is important, i.e., $\lbrace S_1=+1,S_2=-1,S_3=0 \rbrace \neq \lbrace S_1=+1,S_2=0,S_3=-1 \rbrace \neq \lbrace S_1=0,S_2=+1,S_3=-1 \rbrace$.}. So, while summing all the $\left\vert\lambda_{xy}\right\vert^2$s, we can collect the terms having same values as
\begin{align}\label{abc}
    \sum_{y=0}^{2^k-1}\sum_{x=0}^{y-1} &\left\vert\lambda_{xy}\right\vert^2\nonumber\\
    =\frac{2^{2N}}{2^{2k}}&\left\lbrace 2^{(k-1)}\sum_{a=1}^k \left[\prod_{b=k+1}^N \cos^2\left(\frac{\Phi_{ab}}{2}t \right)\right]+2^{(k-2)}\sum_{\substack{a_1<a_2\\a_1,a_2=1}}^k \left[\prod_{b=k+1}^N \cos^2\left(\frac{(\Phi_{a_1b}+\Phi_{a_2b})}{2}t \right)\right]\right.\nonumber\\
    &+\left.2^{(k-2)}\sum_{\substack{a_1<a_2\\a_1,a_2=1}}^k \left[\prod_{b=k+1}^N \cos^2\left(\frac{(\Phi_{a_1b}-\Phi_{a_2b})}{2}t \right)\right]\right.\nonumber\\
    &+\left.2^{(k-3)}\sum_{\substack{a_1<a_2<a_3\\a_1,a_2,a_3=1}}^k \left[\prod_{b=k+1}^N \cos^2\left(\frac{(\Phi_{a_1b}+\Phi_{a_2b}+\Phi_{a_3b})}{2}t \right)\right]\right.\nonumber\\
    &+\left.2^{(k-3)}\sum_{\substack{a_1<a_2<a_3\\a_1,a_2,a_3=1}}^k \left[\prod_{b=k+1}^N \cos^2\left(\frac{(\Phi_{a_1b}+\Phi_{a_2b}-\Phi_{a_3b})}{2}t \right)\right]\right.\nonumber\\
    &+\left.2^{(k-3)}\sum_{\substack{a_1<a_2<a_3\\a_1,a_2,a_3=1}}^k \left[\prod_{b=k+1}^N \cos^2\left(\frac{(\Phi_{a_1b}-\Phi_{a_2b}+\Phi_{a_3b})}{2}t \right)\right]+\ldots\right.\nonumber\\
    &+\left.2^{(k-k)}\sum_{\substack{a_1<a_2<\ldots<a_k\\a_1,a_2,\ldots,a_k=1}}^k \left[\prod_{b=k+1}^N \cos^2\left(\frac{(\Phi_{a_1b}-\Phi_{a_2b}-\ldots-\Phi_{a_kb})}{2}t \right)\right]\right\rbrace.
\end{align}
Here, the first set of terms come from those $\left\vert\lambda_{xy}\right\vert^2$s for which, only one `$S_a$' is nonzero and $+1$. There are ${k\choose1}=k$ possibilities where one among $k$ `$S_k$'s is $+1$. The summation over $a$ takes into account all those possibilities. The second and third sets of terms come from those $\left\vert\lambda_{xy}\right\vert^2$s for which, two `$S_a$'s are nonzer. For the second set, both `$S_a$'s are $+1$, while for the third set, the first nonzero `$S_a$' is $+1$ and the second nonzero `$S_a$' is $-1$. For each of these sets, there are $k\choose2$ such possibilities. ``$\sum_{\substack{a_1<a_2\\a_1,a_2=1}}^k$" counts those possibilities. Same goes for the other sets of terms.

We can write (\ref{abc}) in a sightly compact form as
\begin{align}
    &\sum_{y=0}^{2^k-1}\sum_{x=0}^{y-1} \left\vert\lambda_{xy}\right\vert^2\nonumber\\
    &=\frac{2^{2N}}{2^k}\left\lbrace \frac{1}{2} \sum_{a=1}^k \left[\prod_{b=k+1}^N \cos^2\left(\frac{\Phi_{ab}}{2}t \right)\right]\right.\nonumber\\
    &+\left.\frac{1}{2^2}\sum_{s_1=0}^{1}\sum_{\substack{a_1<a_2\\a_1,a_2=1}}^k \left[\prod_{b=k+1}^N \cos^2\left(\frac{(\Phi_{a_1b}+(-1)^{s_1}\Phi_{a_2b})}{2}t \right)\right]\right.\nonumber\\
    &+\left.\frac{1}{2^3}\sum_{s_1,s_2=0}^1\sum_{\substack{a_1<a_2<a_3\\a_1,a_2,a_3=1}}^k \left[\prod_{b=k+1}^N \cos^2\left(\frac{(\Phi_{a_1b}+(-1)^{s_1}\Phi_{a_2b}+(-1)^{s_2}\Phi_{a_3b})}{2}t\right)\right]+\ldots\right.\nonumber\\
    &+\left.\frac{1}{2^k}\sum_{s_1,\ldots ,s_{(k-1)}=0}^1\sum_{\substack{a_1<a_2<\ldots<a_k\\a_1,a_2,\ldots,a_k=1}}^k \left[\prod_{b=k+1}^N \cos^2\left(\frac{(\Phi_{a_1b}+(-1)^{s_1}\Phi_{a_2b}+\ldots+(-1)^{s_{(k-1)}}\Phi_{a_kb})}{2}t \right)\right]\right\rbrace.
\end{align}
Putting this expression in (\ref{pIcon}), we get the final expression for the I-concurrence as
\begin{align}
    &\mathcal{C}_{12\ldots k|(k+1)\ldots N}(t)\nonumber\\
    &=\left[\frac{2^k-1}{2^{k-1}}\right.\nonumber\\
    &-\left.\frac{4}{2^k}\left\lbrace \frac{1}{2} \sum_{a=1}^k \left[\prod_{b=k+1}^N \cos^2\left(\frac{\Phi_{ab}}{2}t \right)\right]\right.\right.\nonumber\\
    &+\left.\left.\frac{1}{2^2}\sum_{s_1=0}^{1}\sum_{\substack{a_1<a_2\\a_1,a_2=1}}^k \left[\prod_{b=k+1}^N \cos^2\left(\frac{(\Phi_{a_1b}+(-1)^{s_1}\Phi_{a_2b})}{2}t \right)\right]\right.\right.\nonumber\\
    &+\left.\left.\frac{1}{2^3}\sum_{s_1,s_2=0}^1\sum_{\substack{a_1<a_2<a_3\\a_1,a_2,a_3=1}}^k \left[\prod_{b=k+1}^N \cos^2\left(\frac{(\Phi_{a_1b}+(-1)^{s_1}\Phi_{a_2b}+(-1)^{s_2}\Phi_{a_3b})}{2}t\right)\right]+\ldots\right.\right.\nonumber\\
    &+\left.\left.\frac{1}{2^k}\sum_{s_1,\ldots ,s_{(k-1)}=0}^1\sum_{\substack{a_1<a_2<\ldots<a_k\\a_1,a_2,\ldots,a_k=1}}^k \left[\prod_{b=k+1}^N \cos^2\left(\frac{(\Phi_{a_1b}+(-1)^{s_1}\Phi_{a_2b}+\ldots+(-1)^{s_{(k-1)}}\Phi_{a_kb})}{2}t \right)\right]\right\rbrace\right]^{\frac{1}{2}}.
\end{align}

I-concurrences for the other possible $k|(N-k)$ bipartitions are given by re-labeling the masses (hence the corresponding entangling phases) in the above expression.

\bibliographystyle{JHEP}
\bibliography{biblio.bib}

\end{document}